  \providecommand\BibTeX{{%
    Bib\TeX}}}
\def\BibTeX{{\rm B\kern-.05em{\sc i\kern-.025em b}\kern-.08em
    T\kern-.1667em\lower.7ex\hbox{E}\kern-.125emX}}
\definecolor{charcoal}{rgb}{0.21, 0.27, 0.31}
\definecolor{Gray}{rgb}{0.5, 0.5, 0.5}
\definecolor{slateGray}{rgb}{0.44, 0.5, 0.56}
\definecolor{smoke}{rgb}{0.52, 0.53, 0.52}
\definecolor{steelGray}{rgb}{0.44, 0.47, 0.49}
\definecolor{Maroon}{cmyk}{0, 0.87, 0.68, 0.32}
\definecolor{darkgreen}{rgb}{0.0, 0.2, 0.13}
\definecolor{JungleGreen}{rgb}{0.16, 0.67, 0.53}
\definecolor{codegreen}{rgb}{0,0.56,0.56}
\definecolor{codegray}{rgb}{0.3,0.3,0.3}
\definecolor{codepurple}{rgb}{0.7,0,0.82}
\definecolor{codeblack}{rgb}{0,0,0}
\definecolor{backcolor}{rgb}{0.99,0.99,0.99}
\definecolor{keywordblue}{rgb}{0.1,0.1,1}
\definecolor{commentgray}{rgb}{0.65,0.65,0.65}
\newcommand{\ignore}[1]{}%
\lstdefinestyle{mystyle}{
    frame=single,
    backgroundcolor=\color{backcolor},   
    commentstyle=\sffamily\it\color{commentgray},
    keywordstyle=\color{keywordblue},
    numberstyle=\tiny\color{codeblack},
    stringstyle=\color{codepurple},
    basicstyle=\rmfamily\footnotesize\color{black},
    breakatwhitespace=false,         
    breaklines=true,                 
    captionpos=b,                    
    keepspaces=true,                 
    numbers=left,                    
    numbersep=4pt,                  
    showspaces=false,                
    showstringspaces=false,
    showtabs=false,                  
    tabsize=4,
    float=tp,
    floatplacement=tbp,
    abovecaptionskip=0pt
}
\definecolor{codegreen}{rgb}{0,0.56,0.56}
\definecolor{codegray}{rgb}{0.3,0.3,0.3}
\definecolor{codepurple}{rgb}{0.7,0,0.82}
\definecolor{codeblack}{rgb}{0,0,0}
\definecolor{backcolor}{rgb}{0.99,0.99,0.99}
\definecolor{keywordblue}{rgb}{0.1,0.1,1}
\definecolor{commentgray}{rgb}{0.65,0.65,0.65}
\newlength{\mintednumbersep}
\newlist{longitemize}{itemize}{15}
\setlist[longitemize,1]{label=\textbullet}
\setlist[longitemize,2]{label=\textbullet}
\setlist[longitemize,3]{label=\textbullet}
\setlist[longitemize,4]{label=\textbullet}
\setlist[longitemize,5]{label=\textbullet}
\setlist[longitemize,6]{label=\textbullet}
\setlist[longitemize,7]{label=\textbullet}
\setlist[longitemize,8]{label=\textbullet}
\setlist[longitemize,9]{label=\textbullet}
\setlist[longitemize,10]{label=\textbullet}
\setlist[longitemize,11]{label=\textbullet}
\setlist[longitemize,12]{label=\textbullet}
\setlist[longitemize,13]{label=\textbullet}
\setlist[longitemize,14]{label=\textbullet}
\setlist[longitemize,15]{label=\textbullet}
\definecolor{ao(english)}{rgb}{0.0, 0.5, 0.0}
\definecolor{royalblue(web)}{rgb}{0.25, 0.41, 0.88}
\newcommand{\setCommentColor}[1]{%
	\ifthenelse{\equal{#1}{bk}}%
		{\colorlet{colorVar}{red!50}}%
		{\ifthenelse{\equal{#1}{pv}}%
			{\colorlet{colorVar}{blue}}%
			{\ifthenelse{\equal{#1}{mg}}%
				{\colorlet{colorVar}{ao(english)}}%
			{\ifthenelse{\equal{#1}{jr}}%
				{\colorlet{colorVar}{magenta}}%
				{}%
			}%
		}%
	}%
}
\definecolor{codegray}{gray}{0.95}
\newtheorem{definition}{Definition}
\newcommand{\tup}[1]{\langle #1 \rangle}
\newcommand{\pc}{\mathit{pc}}
\newcommand{\buf}{\mathit{buf}}
\newcommand{\exprEval}[2]{\llbracket #1 \rrbracket(#2)}
\newcommand{\srclang}{$\mu$\textsc{Asm}}
\newcommand{\concat}{\cdot}
\renewcommand{\paragraph}[1]{\smallskip\noindent\textbf{#1:\ }}
\newcommand{\highlightBox}[1]{\colorbox{cassandraTheme!18}{$#1$}}
\newcommand{\Var}{\mathit{Regs}}
\newcommand{\Val}{\mathit{Vals}}
\newcommand{\Nat}{\mathbb{N}}
\newcommand{\Obs}{\mathit{Obs}}
\newcommand{\cfObs}{\mathit{CfObs}}
\newcommand{\memObs}{\mathit{MemObs}}
\newcommand{\Prg}{\mathit{p}}
\newcommand{\lbl}{\ell}
\newcommand{\kywd}[1]{\mathbf{#1}}
\newcommand{\storeKywd}{\kywd{store}}
\newcommand{\loadKywd}{\kywd{load}}
\newcommand{\jzKywd}{\kywd{beqz}}
\newcommand{\callKywd}{\kywd{call}}
\newcommand{\retKywd}{\kywd{ret}}
\newcommand{\pcKywd}{\kywd{pc}}
\renewcommand{\pc}{\pcKywd}
\definecolor{obsColor}{rgb}{0.0, 0.34, 0.25}
\newcommand{\obsKywd}[1]{\textcolor{obsColor}{\mathtt{#1}}}
\newcommand{\loadObsKywd}{\obsKywd{load}}
\newcommand{\storeObsKywd}{\obsKywd{store}}
\newcommand{\pcObsKywd}{\obsKywd{pc}}
\newcommand{\callObsKywd}{\obsKywd{call}}
\newcommand{\retObsKywd}{\obsKywd{ret}}
\newcommand{\loadObs}[1]{\loadObsKywd\ #1}
\newcommand{\storeObs}[1]{\storeObsKywd\ #1}
\newcommand{\pcObs}[1]{\pcObsKywd\ #1}
\newcommand{\callObs}[1]{\callObsKywd\ #1}
\newcommand{\retObs}[1]{\retObsKywd\ #1}
\newcommand{\startObsKywd}[1]{\obsKywd{start}}
\newcommand{\commitObsKywd}[1]{\obsKywd{commit}} 
\newcommand{\rollbackObsKywd}[1]{\obsKywd{rollback}}
\newcommand{\commitObs}[1]{\commitObsKywd{}} %
\newcommand{\rollbackObs}[1]{\rollbackObsKywd{}} %
\newcommand{\unaryOp}[1]{\ominus #1}
\newcommand{\binaryOp}[2]{#1 \otimes #2}
\newcommand{\pseq}[2]{#1 ; #2}
\newcommand{\passign}[2]{#1 \leftarrow #2}
\newcommand{\pload}[2]{\loadKywd\ #1, #2}
\newcommand{\pstore}[2]{\storeKywd\ #1, #2}
\newcommand{\pcall}[1]{\callKywd\ #1}
\newcommand{\pret}{\retKywd{}}
\newcommand{\pjz}[2]{\jzKywd\ #1, #2}
\newcommand{\select}[2]{#1(#2)}
\newcommand{\fetch}[1]{
\ifthenelse{\equal{#1}{}}{\kywd{fetch}}{\kywd{fetch}}%
}
\newcommand{\apply}[2]{\mathsf{apl}(#1,#2)}
\newcommand{\mi}[1]{\ensuremath{\mathit{#1}}}
\newcommand{\mr}[1]{\ensuremath{\mathrm{#1}}}
\newcommand{\mf}[1]{\ensuremath{\mathbf{#1}}}
\newcommand{\ms}[1]{\ensuremath{\mathsf{#1}}}
\newcommand{\neutcol}[0]{black}
\newcommand{\stlccol}[0]{contractColor}
\newcommand{\ulccol}[0]{cassandraTheme}
\newcommand{\idecol}[0]{contractColor}
\newcommand{\col}[2]{\ensuremath{{\color{#1}{#2}}}}
\newcommand{\archStyle}[1]{\ms{\col{\stlccol}{#1}}}
\newcommand{\muarchStyle}[1]{{\mf{\col{\ulccol }{#1}}}}
\newcommand{\interfStyle}[1]{{\mr{\col{\idecol }{#1}}}}
\newcommand{\neut}[1]{{\mi{\col{\neutcol }{#1}}}}
\newcommand{\archStep}[2]{\archStyle{\xrightarrow[\neut{#2}]{\neut{#1}}}}
\newcommand{\muarchSem}[1]{\muarchStyle{\{\!\!|} #1\muarchStyle{ |\!\!\} } }
\newcommand{\interfSem}[1]{\interfStyle{\llbracket} #1\interfStyle{\rrbracket} }
\newcommand{\ControlAux}[2]{\mathcal{C}_{\interfStyle{#1}}^{\interfStyle{#2}}}
\newcommand{\MemAux}[2]{\mathcal{M}_{\interfStyle{#1}}^{\interfStyle{#2}}}
\newcommand{\CustomInterf}[3]{\interfStyle{\llbracket} #1\interfStyle{\rrbracket}_{\interfStyle{#2}}^{\interfStyle{#3}}}
\newcommand{\CtSeqInterf}[1]{\interfStyle{\llbracket} #1\interfStyle{\rrbracket}_{\interfStyle{ct}}^{\interfStyle{seq}}}
\newcommand{\ArchSeqInterf}[1]{\interfStyle{\llbracket} #1\interfStyle{\rrbracket}_{\interfStyle{arch}}^{\interfStyle{seq}}}
\newcommand{\hsni}[2]{#2 \vdash #1}
\newcommand{\CtxMuarchSem}[2]{ {\muarchSem{#1}}_{\muarchStyle{#2}} }
\newcommand{\ProposedMuarchSem}[1]{ \CtxMuarchSem{#1}{\nameAbbr{}} }
\newcommand{\CacheStates}{\mathit{CacheStates}}
\newcommand{\CacheState}{\mathit{cs}}
\newcommand{\CacheAccess}{\mathsf{access}}
\newcommand{\CacheUpdate}{\mathsf{update}}
\newcommand{\CacheHit}{\mathtt{Hit}}
\newcommand{\CacheMiss}{\mathtt{Miss}}
\newcommand{\BpStates}{\mathit{BpStates}}
\newcommand{\BpState}{\mathit{bp}}
\newcommand{\BpUpdate}{\mathsf{update}}
\newcommand{\BpPredict}{\mathsf{predict}}
\newcommand{\BtStates}{\mathit{TcStates}}
\newcommand{\BtState}{\mathit{tc}}
\newcommand{\BtAccess}{\mathsf{access}}
\newcommand{\BtUpdate}{\mathsf{update}}
\newcommand{\BtHit}{\mathtt{Hit}}
\newcommand{\BtMiss}{\mathtt{Miss}}
\newcommand{\ReorderBuffers}{\mathit{Bufs}}
\newcommand{\BufProject}[1]{ {#1}\!\!\downarrow }
\newcommand{\resolved}{\mathtt{R}}
\newcommand{\unresolved}{\mathtt{UR}}
\newcommand{\SchedStates}{\mathit{ScStates}}
\newcommand{\SchedState}{\mathit{sc}}
\newcommand{\SchedUpdate}{\mathsf{update}}
\newcommand{\SchedNext}{\mathsf{next}}
\newcommand{\FetchDir}{\mathtt{Fetch}}
\newcommand{\ExeDir}{\mathtt{Execute}}
\newcommand{\CommitDir}{\mathtt{Commit}}
\newcommand{\wMuarch}{\textbf{B}}
\newcommand{\cMuarch}{\textbf{C}}
\definecolor{Blue3}{HTML}{0000CD}
\definecolor{Green4}{HTML}{008B00}
\definecolor{Red3}{HTML}{CD0000}
\definecolor{orange}{rgb}{0.8, 0.47, 0.196}
\newcommand{\hrun}{\muarchStyle{hr}}
\newcommand{\crun}{\interfStyle{cr}}
\newcommand{\hrunp}{\muarchStyle{hr'}}
\newcommand{\crunp}{\interfStyle{cr'}}
\newcommand{\contractTrans}[1]{\overset{#1}{\textcolor{contractColor}{\rightsquigarrow}}}
\newcommand{\hwTrans}[1]{\overset{#1}{\textcolor{cassandraTheme}{\rightarrowtail}}}
\newcommand{\cryptotag}{\kappa}
\newcommand{\name}{\textsc{Cassandra}}
\newcommand{\nameAbbr}{csd}
\newcommand{\secref}[1]{\S\ref{#1}}
\newcommand{\BtUnit}{\textit{Branch Trace Unit}}
\newcommand{\BTU}{\textit{BTU}}
\newcommand{\TraceCache}{\textit{Trace Cache}}
\newcommand{\TC}{\textit{TRC}}
\newcommand{\PatternTable}{\textit{Pattern Table}}
\newcommand{\PT}{\textit{PAT}}
\newcommand{\CkptTable}{\textit{Checkpoint Table}}
\newcommand{\CT}{\textit{CPT}}
\newcommand{\baseline}{\textit{Unsafe Baseline}}
\newcommand{\stl}{\name{}+\textit{STL}}
\newcommand{\spt}{\textit{SPT}}
\newcommand{\prospect}{\textit{ProSpeCT}}
\newcommand{\lightVersion}{\name{}-\textit{lite}}
\newcommand{\raw}{\textit{raw}}
\newcommand{\vanilla}{\textit{vanilla}}
\newcommand{\Vanilla}{\textit{Vanilla}}
\newcommand{\tandem}{\textit{$k$-mers}}
\newcommand{\Tandem}{\textit{$k$-mers}}
\newcommand{\toyAES}{\texttt{Toy-AES-2}}
\newcommand{\targetBlue}[1]{\textcolor{blue}{#1}}
\newcommand{\targetRed}[1]{\textcolor{BrickRed}{#1}}
\newcommand{\pattern}[2]{#1\times#2}
\newcommand{\patternBlue}[2]{\textcolor{blue}{#1}\times#2}
\newcommand{\patternGreen}[2]{\textcolor{PineGreen}{#1}\times#2}
\newcommand{\patternRed}[2]{\textcolor{BrickRed}{#1}\times#2}
\newcommand{\ntBlue}[1]{\textcolor{blue}{#1}}
\newcommand{\ntGreen}[1]{\textcolor{PineGreen}{#1}}
\newcommand{\ntRed}[1]{\textcolor{BrickRed}{#1}}
\definecolor{cassandraTheme}{rgb}{0.5, 0.5, 0.0}
\definecolor{highlightColor}{rgb}{0.94,0.96,0.79}
\definecolor{contractColor}{rgb}{0.60, 0.09, 0.32}
\definecolor{challengeColor}{rgb}{0.36,0.0,0.0}
\newcommand{\challenge}[1]{\textcolor{challengeColor}{\textbf{Challenge #1}}}
\definecolor{insightColor}{rgb}{0.016,0.32,0.27}
\newcommand{\insight}[1]{\textcolor{insightColor}{\textbf{Insight #1}}}
\newcommand{\question}{\textcolor{BrickRed}{\textsc{Question}}}
\definecolor{veryDarkRed}{rgb}{0.25,0.0,0.0}
\newcommand{\questionN}[1]{\textcolor{veryDarkRed}{\textsc{Q#1}}}
\newcommand*\bluecircled[1]{%
   \tikz[baseline=(C.base)]\node[draw,circle,fill=black,inner sep=0.3pt](C) {#1};\!
}
\newcommand{\flowstep}[1]{$\bluecircled{\small \textcolor{white}{#1}}$\hspace{0.06cm}}
\newcommand{\flowstepletter}[1]{$\bluecircled{\footnotesize \textcolor{white}{#1}}$\hspace{0.07cm}}
\definecolor{cryptocurvecolor}{rgb}{0,0.33,0.57}
\newcommand*\croundrectangle[1]{%
   \tikz[baseline=(C.base)]\node[draw,rounded rectangle,fill=cryptocurvecolor,inner sep=0.4pt](C) {#1};\!
}
\newcommand{\cryptocode}[1]{$\croundrectangle{\small \textcolor{white}{#1}}$\hspace{0.07cm}}
\newcommand{\Mathcryptocode}[1]{\croundrectangle{\small \textcolor{white}{#1}}\hspace{0.07cm}}
\definecolor{noncryptocurvecolor}{rgb}{1,0.83,0.47}
\newcommand*\nroundrectangle[1]{%
   \tikz[baseline=(C.base)]\node[draw,gray,rounded rectangle,fill=noncryptocurvecolor,inner sep=0.4pt](C) {#1};\!
}
\newcommand{\noncryptocode}[1]{$\nroundrectangle{\small \textcolor{black}{#1}}$\hspace{0.07cm}}%
\newcommand{\Mathnoncryptocode}[1]{\nroundrectangle{\small \textcolor{black}{#1}}\hspace{0.07cm}}%
\newcommand*\captioncroundrectangle[1]{%
   \protect\tikz[baseline=(C.base)]\protect\node[draw,rounded rectangle,fill=cryptocurvecolor,inner sep=0.4pt](C) {#1};\!
}
\newcommand{\captioncryptocode}[1]{$\captioncroundrectangle{\small \textcolor{white}{#1}}$\hspace{0.07cm}}
\definecolor{backcolor}{rgb}{0.99,0.99,0.99}
\newcommand{\InsightBox}[2][]{%
  \vspace{-7pt}
  \begin{center}
    \begin{tikzpicture}
      \node[draw=black,
            fill=gray!5!white,
            rounded corners=7pt,
            line width=0.6pt,
            inner xsep=6pt,
            inner ysep=3pt,
            text width=\dimexpr\linewidth - 12pt\relax, %
            align=justify,
            anchor=west,
            #1] 
      (box) {#2};
    \end{tikzpicture}
  \end{center}
}
\newcommand{\QuestionBox}[2][]{%
  \vspace{-6pt}
  \begin{center}
    \begin{tikzpicture}
      \node[draw=black,
            fill=gray!5!white,
            rounded corners=7pt,
            line width=0.6pt,
            inner xsep=6pt,
            inner ysep=3pt,
            text width=\dimexpr\linewidth - 12pt\relax, %
            align=justify,
            anchor=west,
            #1] 
      (box) {#2};
    \end{tikzpicture}
  \end{center}
}
\newcommand{\stepMap}[1]{~\textcolor{cassandraTheme}{\rightarrowtail_{\text{#1}}}~}
\newcommand{\stepMapN}[1]{\textcolor{cassandraTheme}{\rightarrowtail_{\text{#1}}}}
\newcommand{\stepArrow}[1]{\textcolor{cassandraTheme}{\xrightarrow{\textcolor{black}{\text{#1}}}}}
\newcommand{\vanillaAvgSize}{637,425}
\newcommand{\vanillaMaxSize}{90,110,880}
\newcommand{\tandemAvgSize}{19.9}
\newcommand{\tandemMaxSize}{2,312}
\newcommand{\avgCompressionRate}{163,371$\times$}
\newcommand{\maxCompressionRate}{30,036,960$\times$}
\newcommand{\storageKB}{1.74 KiB}
\newcommand{\ourPerf}{$1.85\%$}
\newcommand{\stlPerf}{$1.14\%$}
\newcommand{\sptPerf}{$12.07\%$}
\newcommand{\sptPerftoOurs}{$14.21\%$}
\newcommand{\sptChaChaPerf}{$59.8\%$}
\newcommand{\oursShaPerf}{$14.7\%$}
\newcommand{\ourPerfTwoFiftyHzFlush}{$1.80\%$}
\newcommand{\ourPerfChachaAllCrypto}{$2.8\%$}
\newcommand{\ourPerfCurveNinetyS}{$0.6\%$}
\newcommand{\ourPerfCurveAllCrypto}{$6.7\%$}
\newcommand{\ProspectChachaAllCrypto}{$0.8\%$}
\newcommand{\ProspectCurveNinetyS}{$2.5\%$}
\newcommand{\ProspectCurveAllCrypto}{$15.0\%$}
\newcommand{\ourPower}{$2.73\%$}
\newcommand{\ourArea}{$1.26\%$}
\newcommand{\revision}[1]{\textcolor{black}{#1}}
\newcommand{\removeRevision}[1]{}%
\newcommand{\inlineasm}[1]{\textcolor{blue}{\texttt{#1}}}
\newcommand{\inlineasmOneOperand}[2]{\textcolor{blue}{\texttt{#1}}\ #2}
\newcommand{\inlineasmTwoOperand}[3]{\textcolor{blue}{\texttt{#1}}\ #2,#3}
\definecolor{greentransitcolor}{rgb}{0, 0.57, 0.57}
\definecolor{redtransitcolor}{rgb}{0.75, 0, 0}
\definecolor{purpletransitcolor}{rgb}{0.75, 0, 0}
\definecolor{greytransitcolor}{rgb}{0.5, 0.5, 0.5}
\tikzset{SEQ/.style = {draw=greentransitcolor,
                      line width=#1, -{Straight Barb[length=3pt]}},
         SEQ/.default=1pt
}
\tikzset{SPEC/.style = {draw=redtransitcolor,
                      line width=#1, -{Straight Barb[length=3pt]}},
         SPEC/.default=1pt
}
\tikzset{NOTCARE/.style = {draw=black, 
                      line width=#1, -{Straight Barb[length=3pt]}},
         NOTCARE/.default=1pt
}
\def\genbox#1#2#3#4#5#6{%
    \leavevmode\raise#4bp\hbox to#5bp{\vrule height#5bp depth0bp width0bp
    \pdfliteral{q .5 w \csname #2COLOR\endcsname\space RG
                       \csname #3PDF\endcsname{#5}{#6} S Q
             \ifx1#1 q \csname #2COLOR\endcsname\space rg 
                       \csname #3PDF\endcsname{#5}{#6} f Q\fi}\hss}}
\def\sqbox      #1#2{\genbox{#1}{#2}  {sq}       {0}   {4.5}  {2.25}}
\def\trianbox   #1#2{\genbox{#1}{#2}  {trian}    {0}   {5}    {2.5}}
\def\circbox    #1#2{\genbox{#1}{#2}  {circ}     {0}   {5}    {2.5}}
  \providecommand\BibTeX{{%
    Bib\TeX}}}
\begin{document}

\title{\name{}: Efficient Enforcement of Sequential Execution for Cryptographic Programs}
\subtitle{(Extended Version)}

\author{Ali Hajiabadi}
\affiliation{%
  \institution{ETH Z\"{u}rich}
  \country{}
  \vspace{-0.3cm}
}
\authornote{This work was done while the author was at the National University of Singapore.}

\author{Trevor E. Carlson}
\affiliation{%
  \institution{National University of Singapore}
  \country{}
}

\renewcommand{\shortauthors}{Ali Hajiabadi and Trevor E. Carlson}

\begin{abstract}

Constant-time programming is a widely deployed approach to harden cryptographic programs against side channel attacks. However, modern processors often violate the underlying assumptions of standard constant-time policies by transiently executing unintended paths of the program. Despite many solutions proposed, addressing control flow misspeculations in an efficient way without losing performance is an open problem.

In this work, we propose \name{}, a novel hardware/software mechanism to enforce sequential execution for constant-time cryptographic code in a highly efficient manner. \name{} explores the radical design point of disabling the branch predictor and recording-and-replaying sequential control flow of the program. Two key insights that enable our design are that (1) the sequential control flow of a constant-time program is mostly static over different runs, and (2) cryptographic programs are loop-intensive and their control flow patterns repeat in a highly compressible way. 
These insights allow us to perform an upfront branch analysis that significantly compresses control flow traces. We add a small component to a typical processor design, the \BtUnit{}, to store compressed traces and determine fetch redirections according to the sequential model of the program.
Despite providing a strong security guarantee, \name{} counterintuitively
provides an average \ourPerf{} speedup compared to an unsafe baseline processor, mainly due to enforcing near-perfect fetch redirections\footnote{\name{} is published and appeared in Proceedings of 52nd International Symposium on Computer Architecture (ISCA 2025). DOI: \href{https://doi.org/10.1145/3695053.3731048}{10.1145/3695053.3731048}}.
\end{abstract}

\keywords{Cryptography, constant-time programming, speculative execution, hardware/software co-design}

\maketitle

\section{Introduction}

Protecting cryptographic programs has always been a major concern since they are the primary programs that process secrets. While the underlying cryptographic schemes provide strong levels of security to prevent secret extraction through cryptanalysis, their implementations can still be vulnerable to various side channel attacks. 
\textit{Constant-time programming} is a widely deployed approach to protect cryptographic programs against timing and memory side channels and it is the de facto coding discipline to write high-assurance cryptographic code~\cite{bearssl,zinzindohoue2017hacl,almeida2017jasmin,bernstein2012security}. 
Constant-time principles mandate the absence of secret dependent control flow and data flow. In other words, the attacker-visible observations of the execution 
must be independent of the confidential inputs of the program~\cite{almeida2016verifying}.

Unfortunately, speculative execution of modern processors violates standard constant-time principles that assume instructions are executed sequentially. 
After the advent of Spectre~\cite{Spectre2019Kocher}, several speculative execution attacks have demonstrated the ability to leak secrets from verified constant-time programs by transiently declassifying and leaking confidential states~\cite{shivakumar2023spectre,yavarzadeh2024pathfinder}.
For example, recent attacks have demonstrated powerful adversaries that can manipulate the Branch Prediction Unit (BPU) and
precisely control the paths executed by the victim 
and leak secrets from a constant-time AES implementation~\cite{yavarzadeh2024pathfinder}.
More extensive studies on the constant-time implementation of cryptographic programs demonstrate that most popular libraries leak secrets under speculative execution semantics~\cite{barthe2024testing}.
Hence, it is essential to find a solution that fundamentally eliminates the speculative execution attack surface in cryptographic programs. 

Existing defenses to protect constant-time programs, both on the hardware level~\cite{choudhary2021speculative,loughlin2021dolma,schwarz2020context,daniel2023prospect,hajiabadi2024levioso} and the software level~\cite{cauligi2020constant,guarnieri2020spectector,daniel2021hunting,barthe2021high,vassena2021blade,mosier2024serberus,pescosta2021bounded,wu2019abstract,oleksenko2020specfuzz,wang2020kleespectre,guo2020specusym,qi2021spectaint,wang2019oo7}, deploy a restrictive approach to prevent or limit speculative execution of instructions, diminishing the benefits of speculative, high-performance processors.

While naively disabling data flow speculation shows negligible performance impact for cryptographic programs, 
addressing control flow speculation is still a major issue for high-performance processors.
In this work, we investigate a new, radical design point to strictly enforce sequential execution for cryptographic programs, namely \textit{recording-and-replaying}. 
This mechanism disables branch prediction altogether, and instead,
redirects fetch based on the upfront recorded sequential control flow traces.
This design ensures that instruction fetch is always redirected according to the sequential execution model of the program, as assumed by standard constant-time policies. However, this idea has two major challenges:

\challenge{1}: Dynamic control flow traces change based on the program input; pre-computing control flow traces for all possible inputs in general-purpose applications is challenging, if not infeasible.

\challenge{2}: Control flow traces can be huge and storing/loading these traces in the processor would incur high overheads.
In the worst case, it can show a similar slowdown as a processor without a branch predictor which stalls fetch until the branch is resolved.

In this work, we discuss two key insights from constant-time cryptographic programs that overcome these challenges:

\insight{1}: Sequential control flow of constant-time programs are constant with respect to confidential inputs. In addition, public parameters of cryptographic programs are specified by standards or determined by the algorithm (e.g., the key length, number of encryption rounds, etc.). Hence, reusing just a single control flow trace over different runs of a program can be sufficient. However, control flow traces can still be extremely large (up to millions of decisions per static branch in our evaluated programs). As mentioned in \challenge{2}, storing and communicating a huge number of decisions per branch is not efficient, and a solution is needed.

\insight{2}: Most operations in cryptographic programs are in loops and they repeat the same operations over time. Detecting the repeating patterns of branch decisions would help to allow the storage of smaller, compressed patterns, and once loaded, the processor can replay the same pattern in the future.

Leveraging these insights, we propose \name{}, a hardware/software mechanism to enforce sequential execution for cryptographic programs and remove the control flow speculation attack surface within these programs.
To the best of our knowledge, 
\name{} is the first mechanism that takes advantage of the key characteristics of cryptographic applications, 
and counterintuitively, \textit{improves} performance.
The main artifacts of \name{} are twofold:

\textbf{(1) Branch analysis} (\secref{sec:motivation-and-analysis}). We perform an extensive branch analysis of cryptographic programs and devise a trace compression technique that significantly compresses branch traces. Our approach is inspired by DNA sequencing techniques that detect frequent and unknown patterns of nucleotides in large DNA sequences~\cite{marccais2011fast}. 
The average size of our new compressed traces is just 20 entries in BearSSL, OpenSSL, and post-quantum crypto primitives.

\textbf{(2) Microarchitecture} (\secref{sec:design}). We propose a new processor design that (1) communicates compressed branch traces to the processor, and (2) uses branch traces for fetch redirections while avoiding accessing and updating the branch predictor. We add a small, new component to the frontend, called the \BtUnit{} (\BTU{}), that efficiently stores and decompresses dynamic branch information.  

Additionally, we provide a detailed security analysis and 
discussion on how to deploy \name{} in conjunction with other defenses for a comprehensive Spectre mitigation (\secref{sec:security-analysis}). \name{} guarantees sequential execution for cryptographic programs that adhere to a constant-time policy and
can be easily integrated with other solutions that block Spectre attacks that violate software isolation (i.e., provide secure speculation for a sandboxing policy~\cite{guarnieri2021hardware}).

The main contributions of \name{} are as follows:
\begin{itemize}[leftmargin=0.5cm]
    \item Introducing a novel recording-and-replaying mechanism to strictly enforce sequential execution for constant-time cryptographic programs;
    \item Performing a detailed branch analysis and trace compression technique, inspired by DNA sequencing methods, that significantly compresses branch traces;
    \item Proposing an efficient design of \name{} that communicates branch traces with the hardware and enforces branch directions of a sequential execution model;
    \item Achieving a \ourPerf{} speedup over an unsafe baseline processor—delivering performance gains instead of slowdowns—while reducing power consumption by \ourPower{} and incurring only a \ourArea{} area overhead.
\end{itemize}

\section{Background}
\label{sec:background}

\subsection{Constant-Time Programming}
\label{sec:ct-definition}

Modern implementations of cryptographic applications deploy constant-time principles to harden programs against traditional side channels that exploit secret dependent behaviors of the program.
Constant-time principles satisfy confidential input indistinguishability to remove timing, cache, and memory side channels~\cite{almeida2016verifying}.
In other words, constant-time principles assume an adversary can observe the program counter, memory access patterns, and operands of variable-time instructions, and they guarantee all attacker-visible traces of a program are independent from the confidential inputs of the program~\cite{almeida2016verifying,cauligi2022sok}.

Standard constant-time policies provide security for a \textit{sequential execution} model, i.e., all instructions are executed in a sequential order specified by the architectural states of the program.
However, Spectre attacks have demonstrated the ability to leak secrets from constant-time programs in modern processors that use a \textit{speculative execution} model~\cite{yu2019data,shivakumar2023spectre,shivakumar2023typing,yavarzadeh2024pathfinder,barthe2024testing}. For example, Listing~\ref{listing:ct-spec-leak} shows a constant-time decryption of confidential input \texttt{m}. Sequential execution of the code dictates that the secret state is declassified (line 6) only after all decryption rounds are completed, after which any subsequent leak (line 7) is allowed.
However, in a speculative execution model, the \texttt{for} loop can be skipped due to misspeculation and directly leak the confidential input \texttt{m} before executing all decryption rounds, hence, violate constant-time policies of the program.

\begin{listing}[t]
\begin{minted}{c}
uint8 decrypt(uint8 m, uint8 *skey) 
{
    uint8 state = m; //m and state are secret
    for (int i = 0; i < num_rounds; i++)
        state = decrypt_ct(state, skey[i]); 
    uint8 d = declassify(state); //d is public
    return leak(d);
}
\end{minted}
\vspace{-0.75cm}
\caption{Constant-time decryption of \texttt{m}. Misspeculation and skipping the \texttt{for} loop can directly leak the secret \texttt{m}.}
\label{listing:ct-spec-leak}
\end{listing}

\subsection{Speculation Primitives}
\label{sec:spec-primitives}

Speculative execution can be triggered through different sources in modern processors, referred to as \textit{speculation primitives}. Speculation primitives can be categorized into control flow and data flow primitives~\cite{canella2019systematic,cauligi2022sok}.

\textbf{Control flow speculation}. The Branch Prediction Unit (BPU) in modern processors predicts the next PC after control flow instructions and fetches instructions speculatively from the predicted path. Control flow prediction allows the processor to avoid frontend stalls for cases where resolving control flow conditions depends on long latency operations. Prior attacks have demonstrated leaks via three general primitives in the BPU: 
\begin{enumerate}[leftmargin=0.9cm]
    \item[\textbf{PHT}] The Pattern History Table (PHT) predicts \textit{conditional direct branches} (e.g., \texttt{cmp [reg],0; je L}) with two possible outcomes of Taken and Not-Taken (e.g., Spectre-v1~\cite{Spectre2019Kocher}). 

    \item[\textbf{BTB}] The Branch Target Buffer (BTB) predicts \textit{indirect branches} (e.g., \texttt{jmp [reg]}) to determine the target address of next instruction (e.g., Spectre-v2~\cite{Spectre2019Kocher}). 

    \item[\textbf{RSB}] The Return Stack Buffer (RSB) predicts the target address of return instructions. While returns can considered as indirect branches, processors use the RSB to determine return addresses (e.g. Spectre-RSB~\cite{koruyeh2018spectre} and RetBleed~\cite{wikner2022retbleed}). 
\end{enumerate}
Note, that commodity microarchitectures might have multiple components that speculate on a specific type of branch. For example, GadgetSpinner~\cite{chen2024gadgetspinner} demonstrates that the Loop Stream Detector (LSD) in Intel CPUs also speculates on loop conditional branches. However, we use the aforementioned primitives to represent general classes of primitives (i.e., the LSD speculation falls into the PHT primitive).
Throughout this paper, we refer to all control flow instructions (direct, indirect, and return) as \textit{branches}. 

\textbf{Data flow speculation}. Modern processors deploy mechanisms for speculative execution of loads. Prior attacks demonstrated two primitives that can leak:
\begin{enumerate}[leftmargin=0.9cm]
    \item[\textbf{STL}] Store-to-load forwarding (STL) allows a load to forward data from a prior same-address store before all prior stores are resolved, without sending a request to the memory (e.g., Spectre-v4~\cite{horn2018speculative}). 

    \item[\textbf{PSF}] Predictive store forwarding (PSF) allows a younger load to forward data from an unresolved store before the load and store addresses are resolved (e.g., Spectre-PSF~\cite{cauligi2020constant}). 
\end{enumerate}

Mitigating control flow speculation poses higher overheads compared to data flow. 
Our experiments in \secref{sec:perf-results} show that naively addressing data flow speculation in cryptograhic programs incurs negligible performance overhead (less than 1\%).
Hence, we only focus on addressing control flow speculation in an efficient way.

\subsection{Evolution of Hardware Defenses for Spectre}
\label{sec:spectre-defenses}

Early defenses for speculative execution attacks focus only on data caches as the transmission channel, similar to the original Spectre-v1~\cite{yan2018invisispec,saileshwar2019cleanupspec,DOM,pashrashid2023hidfix,khasawneh2019safespec,pashrashid2022fast}. More comprehensive defenses, like STT~\cite{yu2019speculative} and NDA~\cite{weisse2019nda}, propose mechanisms to prevent leaks from a more comprehensive list of transmission channels. 
These solutions implement dynamic taint tracking to restrict the execution or data propagation for instructions that are tainted by speculatively loaded data. While this approach protects \textit{sandboxed} programs~\cite{guarnieri2021hardware}, they fail to protect constant-time programs, where secrets are loaded \textit{non-speculatively} 
(see line 3 in Listing~\ref{listing:ct-spec-leak}).

Recent Spectre defenses for constant-time programs extend prior solutions to protect non-speculative secrets as well~\cite{choudhary2021speculative,loughlin2021dolma,schwarz2020context,daniel2023prospect,hajiabadi2024levioso}. 
\removeRevision{For example, SPT~\cite{choudhary2021speculative} extends the taint tracking mechanism of STT and assumes all data in registers and memory are tainted unless they leak during the non-speculative, sequential execution of the program which means they are declassified intentionally and can be untainted.}
Most hardware-only defenses for constant-time programs introduce additional slowdown compared to the sandboxed cases. This is mainly because they must protect not only speculatively loaded data but also all values that are already loaded in the registers, as any of them can potentially be secret.
In this paper, our goal is to strictly enforce sequential execution for cryptographic code and avoid the additional overhead of prior solutions. To the best of our knowledge, our approach is the first that exploits the key characteristics of cryptographic code to improve performance compared to an unprotected baseline, while providing a sequential security guarantee.

\textbf{Motivating example}: DOLMA~\cite{loughlin2021dolma} shows that protecting non-cryptographic programs under a sandboxing policy incurs a 10.2\% performance overhead, rising to 22.3\% across \textit{all} applications when extended to a constant-time policy; this trend holds for all hardware-based defenses. \name{}, however, efficiently protects constant-time programs and allows the CPU to select more efficient defenses for other applications which better fit their threat model.

\section{Threat Model}
\label{sec:threat-model}

\name{} eliminates the possibility of transient execution exclusively for cryptographic code that adheres to the sequential constant-time policy.
\name{} does not provide protection for software isolation (i.e., sandboxing policy~\cite{guarnieri2021hardware}). Existing lightweight isolation techniques~\cite{schwarzl2022robust,reis2019site,hertogh2023quarantine} or secure speculation mechanisms for sandboxing~\cite{yu2019speculative,loughlin2021dolma,weisse2019nda,hajiabadi2024levioso} can be integrated with \name{} to prevent transient leaks of non-crypto code as well.

We consider Meltdown-type attacks~\cite{Lipp2018meltdown,van2018foreshadow,canella2019fallout,van2019ridl,schwarz2019zombieload} out of scope. These attacks exploit the transient execution upon exceptions and CPU faults, which are efficiently mitigated in recent CPUs via microcode updates~\cite{intel-affected-cpus}.
Additionally, non-speculative control flow attacks~\cite{evtyushkin2018branchscope,puddu2021frontal,hajiabadi2024conjuring} are out of scope; constant-time programs are inherently safe against such attacks.

\section{Branch Analysis of Constant-Time Cryptographic Programs}
\label{sec:motivation-and-analysis}

In this section, we investigate the practicality of a \textit{recording-and-replaying} solution for cryptographic programs to enforce sequential execution.
In \secref{sec:key-insights}, we discuss the key insights that enable our proposed solution, and in \secref{sec:branch-analysis}, we detail our branch analysis.

\subsection{Key Insights}
\label{sec:key-insights}

We discuss two key insights that are directly derived from fundamental characteristics of constant-time cryptographic programs.

\InsightBox{\insight{1}: \textit{Sequential control flow of constant-time programs is independent of confidential inputs and is determined by the algorithm and its implementation, which are known before execution.}}

As we discussed in \secref{sec:ct-definition}, constant-time principles assume that the entire control flow trace and accessed memory addresses are leaked~\cite{almeida2016verifying}. Hence, the dynamic control flow of the program is required to be independent from confidential inputs. On the other hand, public parameters of the cryptographic programs are specified by standards or determined by the underlying scheme and its implementation, e.g., the key length, array sizes, number of encryption rounds, etc. As a result, the sequential and dynamic control flow of these programs is known before execution and does not change during runtime. This enables us to pre-compute sequential branch traces and enforce them during runtime, instead of using the BPU to predict the branch directions.

While branch traces of cryptographic programs can be computed before execution, they can still be prohibitively large and incur penalties to load them in the CPU. Our \insight{2} enables us to significantly compress the branch traces; fitting the entire trace of most branches into a single entry of a small structure in the CPU.

\InsightBox{\insight{2}: \textit{Sequential control flow of cryptographic programs is highly regular and loop-intensive, allowing for significant compression of control flow traces.}}

Most operations and transformations of constant-time cryptographic programs occur in loops (like Listing~\ref{listing:ct-spec-leak}); standard constant-time policies
allow one to wrap the operations in loops if the loop count is public. 
Hence, this insight enables us to detect the repeating patterns of each branch and only communicate this pattern with the CPU to repeatedly replay. 

\textbf{Example: ChaCha20}~\cite{nir2018chacha20} is a stream cipher that starts with an internal state $S$ as a $4 \times 4$ matrix of 32-bit values, consisting of the secret key (256-bit), the nonce (96-bit), a counter (32-bit), and a 128-bit constant, totaling 64 bytes. The encryption of the plaintext $P$ proceeds in four steps: (1) The internal state $S$ is initialized, and state $K$ is initialized with a copy of $S$. (2) State $K$ is transformed 10 times using the $DoubleRound$ function (two rounds of additions, XORs, and rotations); totaling 20 rounds of transformation. (3) Each element of $K$ is added to the corresponding element of $S$. (4) The first 64 bytes of plaintext $P$ are XORed with $K$. These steps are repeated until the entire plaintext is encrypted. Notably, all control flow decisions, such as loop counts, calls, and returns, are static and determined by the algorithm, with all operations wrapped in loops.

\begin{figure}
    \centering
    \includegraphics[trim=0cm 1cm 0cm 0cm,width=\linewidth]{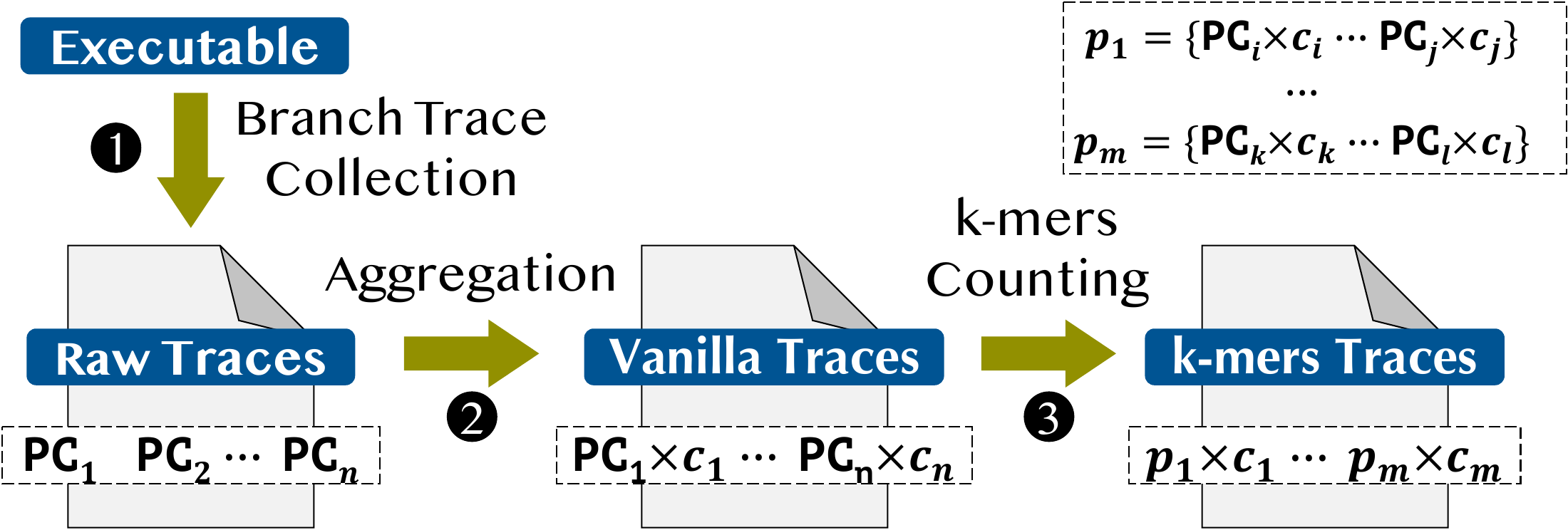}
    \caption{Branch analysis overview in \name{}. Traces are per static branch.}
    \label{fig:branch-analysis-flow}
\end{figure}

\subsection{Detailed Branch Analysis}
\label{sec:branch-analysis}

In this section, we investigate branches in different constant-time cryptographic programs from BearSSL~\cite{bearssl}, OpenSSL~\cite{openssl} and post-quantum crypto (PQC) programs: Kyber~\cite{kyber} and SPHINCS+~\cite{sphincs}. We consider all types of branches: conditional direct branches, unconditional indirect branches, returns, etc. To collect branch traces, we use Intel Pin~\cite{luk2005pin} and record the branch target at each execution of a branch. Figure~\ref{fig:branch-analysis-flow} shows an overview of our branch analysis steps. In the first step (step~\flowstep{1} in Figure~\ref{fig:branch-analysis-flow}), we collect the \raw{} traces for each static branch. In this trace, we capture all the target PCs of a branch (i.e., the branch outcome) in the order they are executed (we log the next PC for not-taken cases). Here is an example of \raw{} trace of a loop branch $BR_0$ with loop count of four:
\begin{center}
    $\targetBlue{PC_1} \concat \targetBlue{PC_1} \concat \targetBlue{PC_1} \concat \targetBlue{PC_1} \concat \targetRed{PC_0}$
\end{center}
where $\targetBlue{PC_1}$ is the taken path of the branch and $\targetRed{PC_0}$ is the next PC after $BR_0$ (i.e., the not-taken path).

\begin{table}[t]
  \centering
  \caption{Branch analysis of cryptographic programs.  
  \tandem{} trace size is the sum of trace size and its pattern set size. 
  }
  \vspace{-0.3cm}
  \resizebox{\linewidth}{!}{%
  \begin{tabular}{cl|rr|rr|rr}
  \toprule
  \multicolumn{2}{c|}{\multirow{3}{*}{\textbf{\Large Program}}} & \multicolumn{2}{c|}{\multirow{2}{*}{\Large \Vanilla{} trace size}} & \multicolumn{2}{c|}{\Large \Tandem{}} & \multicolumn{2}{c}{\Large \Tandem{}}\\
  
  \multicolumn{2}{c|}{} & & & \multicolumn{2}{c|}{\Large trace size} &  \multicolumn{2}{c}{\Large compression rate}\\
  
  \cline{3-8}
  
  \multicolumn{2}{c|}{} & \multicolumn{1}{c|}{\Large Avg} & \multicolumn{1}{c|}{\Large Max} & \multicolumn{1}{c|}{\Large Avg} & \multicolumn{1}{c|}{\Large Max} & \multicolumn{1}{c|}{\Large Avg} & \multicolumn{1}{c}{\Large Max}\\
  
  \midrule
  
  \multirow{7}{*}{\rotatebox[origin=c]{0}{\large \sqbox1{cred}}} & \Large RSA-2048  & \Large 221,619.8 &\Large 24,340,548	& \Large 35.0 & \Large 2,312	&\Large 18,677.0	&\Large 1,622,703.2\\

  & \Large EC\_c25519 & \Large 965,261.6 & \Large 51,538,410 & \Large 7.9 &\Large 134 &\Large 321,607.7&\Large  17,179,470.0\\

  & \Large DES      &\Large 1,483,319.9 &\Large 24,000,000	&\Large 7.9 &\Large 34	&\Large 494,420.7 &\Large 8,000,000.0\\
  
  & \Large AES-128 & \Large 163.2 & \Large 1,530	&\Large 7.6	&\Large 50	&\Large 43.8 &	\Large 510.0 \\

  & \Large ChaCha20 &\Large 175.8 &\Large 752 &\Large 35.5 &\Large 561 &\Large 40.9&\Large 250.7\\

  & \Large Poly1305 & \Large 45.0& \Large 600 &\Large 14.9&\Large 134&\Large 8.7 &\Large 200.0\\

   & \Large SHA-256 &\Large 3,350.5 &\Large 31,736 &\Large 10.7 &\Large 70 &\Large 1,077.6 &\Large 10,578.7\\

  \midrule

  \multirow{3}{*}{\rotatebox[origin=c]{0}{\large \trianbox1{cgreen}}} & \Large curve25519  & \Large 19,375.0 &\Large 128,700	& \Large 4.3 & \Large 18 & \Large 3,479.2 & \Large 17,000.0 \\

  & \Large chacha20  & \Large 24,500.0 &\Large 32,000 & \Large 3.0 & \Large 3	& \Large 8,166.7 & \Large 10,666.7 \\

  & \Large sha256  & \Large 440.0 &\Large 	44,316 & \Large 25.8 & \Large 803  & \Large 42.2 & \Large 5,539.5 \\

  \midrule

  \multirow{5}{*}{\rotatebox[origin=c]{0}{\large \circbox1{cblue}}} & \Large kyber512  & \Large 738,074.1 &\Large 34,620,000	& \Large 5.3 & \Large 24 & \Large 89,705.1	& \Large 2,304,000.0\\

  & \Large kyber768  & \Large 1,177,127.1 &\Large 69,195,000 & \Large 5.6 & \Large 54	& \Large 143,300.9 & \Large 4,608,000.0\\

  & \Large sphincs-shake-128s  & \Large  3,097,903.5 &\Large 90,110,880	& \Large  20.5 & \Large 	348 & \Large 	1,019,536.1 & \Large 30,036,960.0 \\

  & \Large sphincs-haraka-128s  & \Large 1,863,707.0  &\Large 59,244,320 & \Large 24.5 & \Large 544 & \Large 	599,537.1 & \Large 19,748,106.7\\

  & \Large sphincs-sha2-128s  & \Large 298,160.1 &\Large 5,300,746	& \Large  24.6 & \Large 	389 & \Large 	42,948.7 & \Large 1,766,834.0 \\

  \midrule
  
  \multicolumn{2}{c|}{\Large All} &\Large 637,425.5 &\Large 90,110,880	& \Large 19.9 &\Large 2,312 &\Large 163,370.7
  &\Large 30,036,960.0\\
  
  \bottomrule

  \multicolumn{8}{c}{\multirow{2}{*}{\Large \sqbox1{cred} BearSSL \quad
    \trianbox1{cgreen} OpenSSL \quad
    \circbox1{cblue} Post-Quantum Crypto (PQC)}}
  \end{tabular}
  }
  \label{table:branch-analysis-bearssl}
\end{table}

The next step of the analysis builds the \vanilla{} traces that are a more compact format of the \raw{} traces (step~\flowstep{2}). In this format, we aggregate the branch outcomes that are repeating and replace them with the repeated outcome PC and number of repetitions (this is also known as a run-length encoding). Here is the \vanilla{} trace of branch $BR_0$ discussed earlier:
\begin{center}
    $\patternBlue{PC_1}{4} \concat \patternRed{PC_0}{1}$
\end{center}

\Vanilla{} traces are the baseline traces that we use for analysis and compression.
Table~\ref{table:branch-analysis-bearssl} shows 
that the average size of \vanilla{} traces per branch is \vanillaAvgSize{}
in our evaluated programs,
and the maximum size is \vanillaMaxSize{}\footnote{Here, size refers to the number of elements in a trace, not storage size.}. 
Communicating these large traces with the hardware can incur high efficiency overheads. However, we expect these traces to be represented by fewer elements according to \insight{2}; we only need to detect the repeating outcome patterns of each static branch. We aim to devise a generic approach that can detect the repeating patterns in a given \vanilla{} trace. 

\QuestionBox{\question{}: How does one detect the repeating patterns and their frequency in a \vanilla{} trace?}

Detecting repeating, unknown patterns in large traces has been the focus of many domains, like database mining~\cite{agrawal1995mining} and DNA sequencing~\cite{benson1999tandem,marccais2011fast}.
For example, two problems in DNA sequencing that can be useful are
finding tandem repeats~\cite{benson1999tandem} and $k$-mers counting~\cite{marccais2011fast}. 
A tandem repeat in a DNA sequence is two or more contiguous copies of a pattern of nucleotides. Finding tandem repeats has many applications, like individual identification and tracing the root of an outbreak. $k$-mers also refer to a substring of size $k$ of a given DNA sequence. Counting the frequency of $k$-mers is useful in genome assembly and sequence alignment.

\begin{algorithm}[t]
\footnotesize
\DontPrintSemicolon
\SetAlgoLined
\SetKw{KwInit}{Initialization}
\KwIn{DNA sequence $seq$}
\KwOut{\Tandem{} trace $K$ and pattern set $P$}
$unused\_letters \gets alphabet \setminus \mathsf{unique\_letters}(seq)$\\
$current\_len \gets \infty$\\
\While{$\mathsf{len}(seq) < current\_len$}{
    $current\_len = \mathsf{len}(seq)$\\
    $coverage.\mathsf{clear}()$\\
    \For{$k\leftarrow 2$ \KwTo $max\_k$}{
        $freqs \gets \mathsf{count\_kmers}(seq, k)$\\
        \ForEach{$kmer \in freqs$}{
             \If{$fres[kmer] > 1 \land \mathsf{Size}(kmer) \leq max\_k$}{
                $coverage[kmer] \gets (k \times freqs[kmer])/len(seq)$\\
             }
        }
    }
    $most\_frequent\_kmer \gets \mathsf{max}(coverage)$\\
    $frequent\_kmers.\mathsf{insert}(most\_frequent\_kmer)$\\
    $letter \gets unused\_letters.\mathsf{pop}()$\\
    $seq.\mathsf{replace\_and\_merge}(most\_frequent\_kmer, letter)$\\
}
$K \gets seq$\\
$P \gets frequent\_kmers$
\caption{\Tandem{} Branch Compression}
\label{algo:kmers-compression}
\end{algorithm}

\subsubsection{$k$-mers Counting and Traces}
\label{sec:kmers-counting}

In this work, we deploy the $k$-mers counting technique for pattern repeat detection (step~\flowstep{3}). The reason for this choice is that our experiments with the state-of-the-art tools show that $k$-mers counting tools are much faster to analyze large traces (up to millions) compared to others (e.g., the TRF tool~\cite{benson1999tandem} for tandem repeat finding) and also they are more configurable. We use scikit-bio Python library~\cite{scikit} in our analysis which allows us to define a custom alphabet for DNA sequences, while most other tools only consider four letters A, C, G, T; some branches can have more than four outcomes (e.g., a return can jump to more than four callsites). Additionally, $k$-mers counting tools allow configuring the algorithm parameters which is useful to enforce starting with smaller and more frequent patterns and then continuing to larger patterns if necessary. This is beneficial to reduce the storage requirement as much as possible. However, note that we use the $k$-mers counting just as a demonstration and our compression results do not depend on a specific tool.

Before the $k$-mers counting step of our analysis, we transform \vanilla{} traces to their equivalent DNA sequences. For example, \vanilla{} trace of branch $BR_1$ of this form:
\begin{center}
    $\patternBlue{PC_0}{2} \concat \patternRed{PC_1}{5} \concat \patternBlue{PC_0}{2} \concat \patternRed{PC_1}{5} \concat \patternGreen{PC_2}{3}$
\end{center}
is transformed to this DNA sequence: $\ntBlue{A}\ntRed{C}\ntBlue{A}\ntRed{C}\ntGreen{G}$.

Algorithm~\ref{algo:kmers-compression} shows a simplified version of the technique that we use to build \tandem{} traces.
The input of the algorithm is the equivalent DNA sequence of a \vanilla{} trace.
The core of the algorithm is the $count\_kmers$ procedure (line 7) that takes $k$ and DNA sequence $seq$ as input and builds a frequency map of all the existing $k$-mers and their frequency. Algorithm~\ref{algo:kmers-compression} continues compressing the sequence with the most frequent pattern (i.e., has the highest coverage in the sequence, lines 14-17) 
until the length of the compressed sequence stops reducing (line 3).
Finally, the output of the algorithm is the compressed DNA sequence $K$ and the set of detected patterns $P$ (lines 19-20).

As the final step, we re-transform the DNA \tandem{} patterns back to the PC traces. We refer to the result as the \tandem{} representation; \tandem{} representation consists of the \tandem{} trace $K$ and its transformed pattern set $P$. 
For example, here is the \tandem{} trace of branch $BR_1$ that we discussed earlier:
\begin{center}
    $\patternBlue{p_0}{2} \concat \patternRed{p_1}{1}$
\end{center}
where the pattern set is: 
\begin{center}
    $P =\{\ntBlue{p_0}: \pattern{PC_0}{2} \concat \pattern{PC_1}{5}, \ntRed{p_1}: \pattern{PC_2}{3}\}$
\end{center}

Table~\ref{table:branch-analysis-bearssl} shows the average and maximum size of \tandem{} representation (sum of trace $K$ size and pattern set $P$ size). The average \tandem{} size per static branch is \tandemAvgSize{} and the maximum size is \tandemMaxSize{}. Compared to \vanilla{} trace sizes, our compression leads to an average compression rate of \avgCompressionRate{} and a maximum rate of \maxCompressionRate{}.
Note, that the results presented in Table~\ref{table:branch-analysis-bearssl} exclude the branches that always have a single target (i.e., their \vanilla{} trace size is already 1).

\textbf{Example: \toyAES{}}.
Figure~\ref{fig:toy-example} illustrates the \name{} branch analysis for a toy example that encrypts data in three encryption rounds
with key and plaintext length of two. 
In the first step, \raw{} traces are collected per static branch (step~\flowstep{1}). For instance, \texttt{BR6} is a loop branch with a loop count of two: it executes \texttt{BR7} twice and then executes the fall-through path, \texttt{PC7}.
In the next step, \vanilla{} traces are generated (step~\flowstep{2}). After transforming \vanilla{} traces into equivalent DNA sequences (step~\flowstep{3}), we perform our $k$-mers branch compression technique and generate the \tandem{} traces and pattern sets (step~\flowstep{4}).

\begin{algorithm}[t!]
\footnotesize
\DontPrintSemicolon
\SetAlgoLined
\SetKwFunction{proc}{generate\_kmers\_traces}
\SetKw{KwInit}{Initialization}
\KwIn{Input binary $bin\_in$, $inp1$, $inp2$}
\KwOut{Updated binary $bin\_out$ with traces and hint information}
$traces.\mathsf{clear}()$\\
$unique\_branches \gets \mathsf{detect\_static\_branches}(bin\_in)$ \flowstepletter{A}\\
\ForEach{$branch \in unique\_branches$}{
    $[K_1, P_1] \gets \proc(branch, bin\_in, inp1)$\\
    $[K_2, P_2] \gets \proc(branch, bin\_in, inp2)$\\
    $is\_input\_dependent \gets \mathsf{diff}(K_1, K_2)$\\
    \If{$\neg is\_input\_dependent$}{
        $traces.\mathsf{insert}([branch, K_1, P_1])$\\
    }
}
$bin\_out \gets \mathsf{embed\_information}(bin\_in, traces)$\\
\SetKwProg{myproc}{Procedure}{}{}
\myproc{\proc{$branch$, $bin$, $inp$}}{
  $R \gets \mathsf{collect\_raw\_traces}(branch, bin, inp)$  \flowstepletter{B}\\
  $V \gets \mathsf{transform\_to\_vanilla\_traces}(R)$ \flowstepletter{C}\\
  $DNA\_seq \gets \mathsf{transform\_to\_DNA}(V)$ \flowstepletter{D}\\
  $[K, P] \gets \mathsf{kmers\_compression}(DNA\_seq)$ \flowstepletter{E}\\
  \KwRet $[K, P]$\\
}
\caption{Trace Generation Procedure}
\label{algo:trace-generation}
\end{algorithm}

\begin{figure*}[t]
    \centering
    \includegraphics[trim=0cm 1.8cm 0cm 0cm,width=0.95\linewidth]{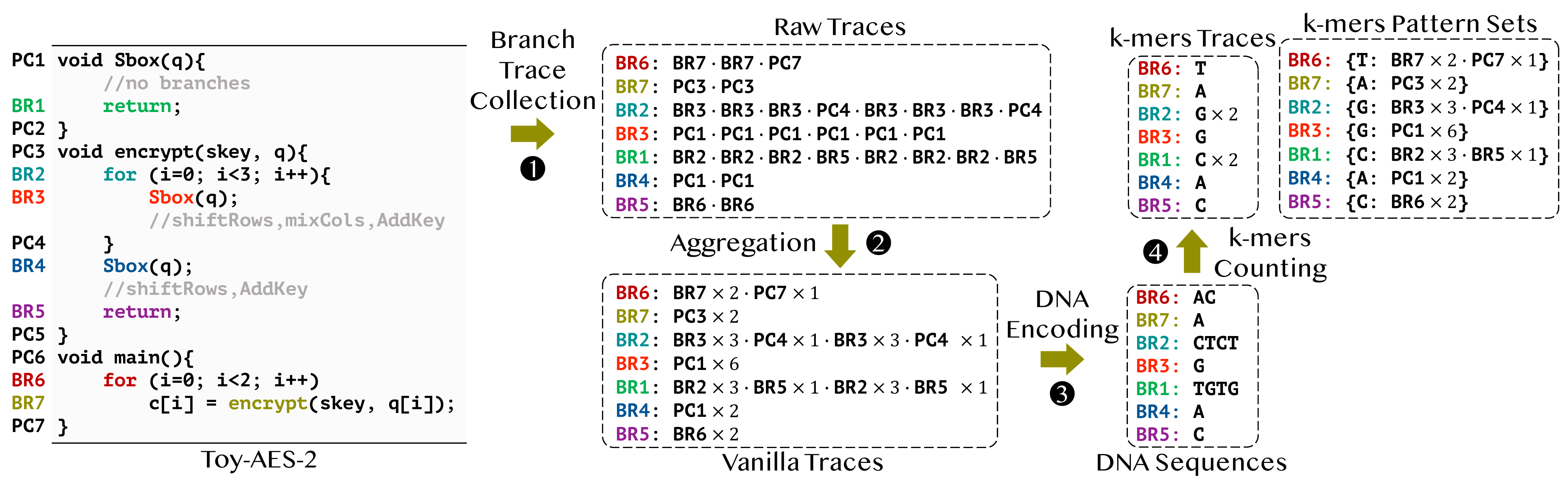}
    \caption{\name{} branch analysis workflow example. Note, that the branches are analyzed separately and traces are generated per static branch; DNA sequences of branches are independent from each other.}
    \label{fig:toy-example}
\end{figure*}

\subsection{Automatic Trace Generation Procedure}
\label{sec:trace-generation}

We provide an automatic procedure to generate branch traces for a given binary of a constant-time cryptographic application. Algorithm~\ref{algo:trace-generation} shows the steps of this procedure (steps~\flowstepletter{A}-\flowstepletter{E}).

Step~\flowstepletter{A} identifies all static branches that appear during the execution (line 2) and stores them in the $unique\_branches$ set. Steps~\flowstepletter{B}-\flowstepletter{E} generate \tandem{} traces for each branch, as we explained in \secref{sec:branch-analysis}.

Note, that in lines 4 and 5, we generate \tandem{} traces twice with two different inputs to detect branches that their traces change depending on the input. For example, \textit{stream loops} in stream ciphers, like ChaCha20, accept input plaintexts of an arbitrary length.  
The program processes each block of the plaintext in a loop (i.e., the stream loop). The \vanilla{} trace of the stream loop is in the form of $\patternGreen{PC_1}{n} \concat \patternRed{PC_0}{1}$, where $n$ is the length of the input\footnote{In addition, some branches in post-quantum crypto primitives have random traces that change in different runs, e.g., two branches in rejection sampling of Kyber.}.
However, all the other branches are wrapped inside this loop and repeat. Hence, they have valid \tandem{} traces.
For branches that their trace depends on the input, we stall the fetch until the branch resolves\footnote{In general, if traces are not available for a crypto branch,
we redirect fetch only if the branch direction is resolved.}; this incurs a negligible penalty since they are not frequent and quickly resolve.

Finally, once all branches are analyzed, the input binary is instrumented with the \tandem{} traces and their \textit{hint information} to facilitate their access during execution (line 11, see \secref{sec:trace-communication} for the details of trace representations and their communication with the hardware). We discuss the runtime overhead of the one-time trace generation procedure for all applications in \secref{sec:trace-gen-results}.

\section{Design of \name{}}
\label{sec:design}

To efficiently implement \name{} in hardware, we need to (1) communicate the branch traces prepared by our analysis with the hardware on demand, and (2) design a specialized unit, called \BtUnit{} (\BTU{}), in the fetch stage to determine the branch directions based on the sequential branch traces. The \BTU{} is designed similarly to Trace Caches~\cite{rotenberg1996trace,rotenberg1997trace} and Schedule Caches~\cite{padmanabha2017mirage} in prior work, with two key differences: (1) traces are determined before execution in \name{} and no dynamic trace selection methodology is used. (2) In case of a trace miss in the \BTU{}, the frontend stalls until the trace becomes available, while prior works would switch to a normal, speculative fetch procedure. 

In \secref{sec:design-overview}, we present an overview of \name{} design, and in \secref{sec:trace-communication} and \secref{sec:uarch}, we provide the details for \name{} implementation.

\subsection{Overview}
\label{sec:design-overview}

Figure~\ref{fig:uarch} shows an overview of the \name{} microarchitecture. When a branch is fetched, two possible scenarios occur depending on whether the branch
belongs to a cryptographic program, or it is a non-crypto branch.
In the former scenario, the fetch unit queries the \BTU{} to determine the next PC (step~\flowstep{1}), and in the latter scenario, the BPU predicts the next PC (step~\flowstep{2}). The \PatternTable{} and the \TraceCache{} are the two sub-components of the \BTU{} that (1) determine the next PC for each branch and (2) keep track of the progress within the trace. In cases that a trace fits in one entry of the \TraceCache{}, it will rotate to keep replaying the trace. However, if the trace does not fit in one entry then the head element of the entry is removed when the branch commits, and the entire entry shifts and prefetches the upcoming parts of the trace at the back of the entry (step~\flowstep{3}). Finally, when a branch misses in the \BTU{}, one of the entries is evicted and a checkpoint of its progress is taken in the \CkptTable{}. This checkpoint allows to resume the execution of the evicted branch when it reappears in the future. In \secref{sec:uarch}, we discuss the details of our microarchitecture. 

\begin{figure}
    \centering
    \includegraphics[trim=0cm 2cm 0cm 0cm,width=0.9\linewidth]{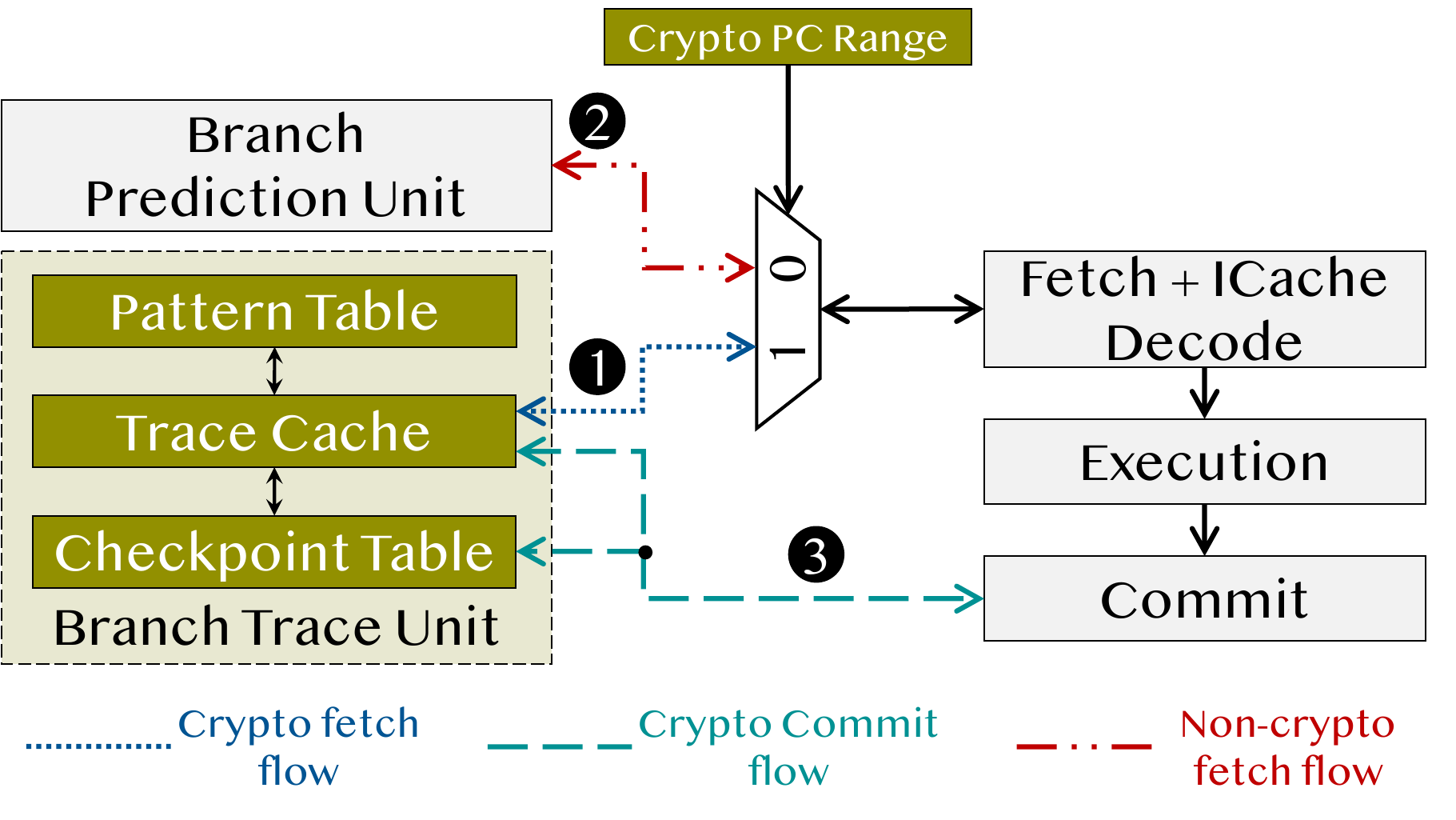}
    \caption{Overview of \name{} microarchitecture. Crypto branches do not access or update the BPU.}
    \label{fig:uarch}
\end{figure}

\subsection{Trace Representation and Communication}
\label{sec:trace-communication}

We use the output of Algorithm~\ref{algo:kmers-compression} to prepare the branch traces. 
Traces consist of two parts per static branch:
(1) the pattern set built from the \tandem{} patterns $P$,
which stores all the possible branch outcomes, and (2) the branch trace built from the \tandem{} trace $K$.
Figure~\ref{fig:elements}(a) shows the structure of each element in the pattern set. Each pattern element has a 12-bit target offset (the signed difference between the branch PC and the target PC) and the number of its repetitions (8-bit). 
In cases where the number of repetitions exceeds 8 bits, the element is duplicated in a way that the sum of the two elements is equal to the original number: 
\begin{center}
    $\patternRed{\delta(BR_0)}{300} \rightarrow \patternGreen{\delta(BR_0)}{255} \concat \patternGreen{\delta(BR_0)}{45}$
\end{center}
We use a compact form to store the patterns in cases where patterns overlap. For example, if two patterns in a trace are $\ntBlue{A}\ntRed{C}\ntGreen{T}$ and $\ntRed{C}\ntGreen{T}\ntBlue{A}$, then the output pattern set is $\ntBlue{A}\ntRed{C}\ntGreen{T}\ntBlue{A}$.

Figure~\ref{fig:elements}(b) shows the structure of each element in the branch trace. The first two fields, \textit{pattern index} and \textit{pattern size}, specify the corresponding pattern from the pattern set. For example, if the corresponding pattern of a trace element is $\ntRed{C}\ntGreen{T}$ and the entire pattern set is $\ntBlue{A}\ntRed{C}\ntGreen{T}\ntBlue{A}$, then the \textit{pattern index} is $1$ (indices start from $0$) and the \textit{pattern size} is $2$. \textit{Pattern counter} is equal to the sum of the repetitions of the corresponding pattern elements and the \textit{trace counter} specifies the total number of times that the pattern needs to be repeated before advancing to the next trace element.

A special End of Trace marker is used to denote the end of each trace. This allows the processor to repeat the trace whenever it reaches the end of the trace.
We store traces in data pages and embed hints for each static branch:

\textbf{(1) \textit{Single-target} mark}. A significant portion of branches always jump to a single target (e.g., \texttt{"call sbox <pc>"}), and we mark such branches as \textit{single-target} and do not need to store and communicate traces for them (e.g., $79\%$ of static branches in RSA are \textit{single-target}); we only need to embed its single target within the hint information (i.e., a \textit{PC offset} pointing to the branch's target). This implementation ensures that no \BTU{} resources are used for \textit{single-target} branches and no trace miss would occur as well.

\textbf{(2) Traces Virtual Address offset ($\Delta$)}. If the branch is \textit{multiple-target}, then $\Delta$ points to the data page address holding branch traces.

\textbf{(3) \textit{Short-trace} mark}. We mark the branches when their traces are smaller than 16 (i.e., they fit in one entry of the \BTU{}). This will allow us to avoid additional accesses to bring traces to the \BTU{} and only repeat the trace once loaded.

\begin{figure}
    \centering
    \includegraphics[trim=0cm 2cm 0cm 0cm,width=0.85\linewidth]{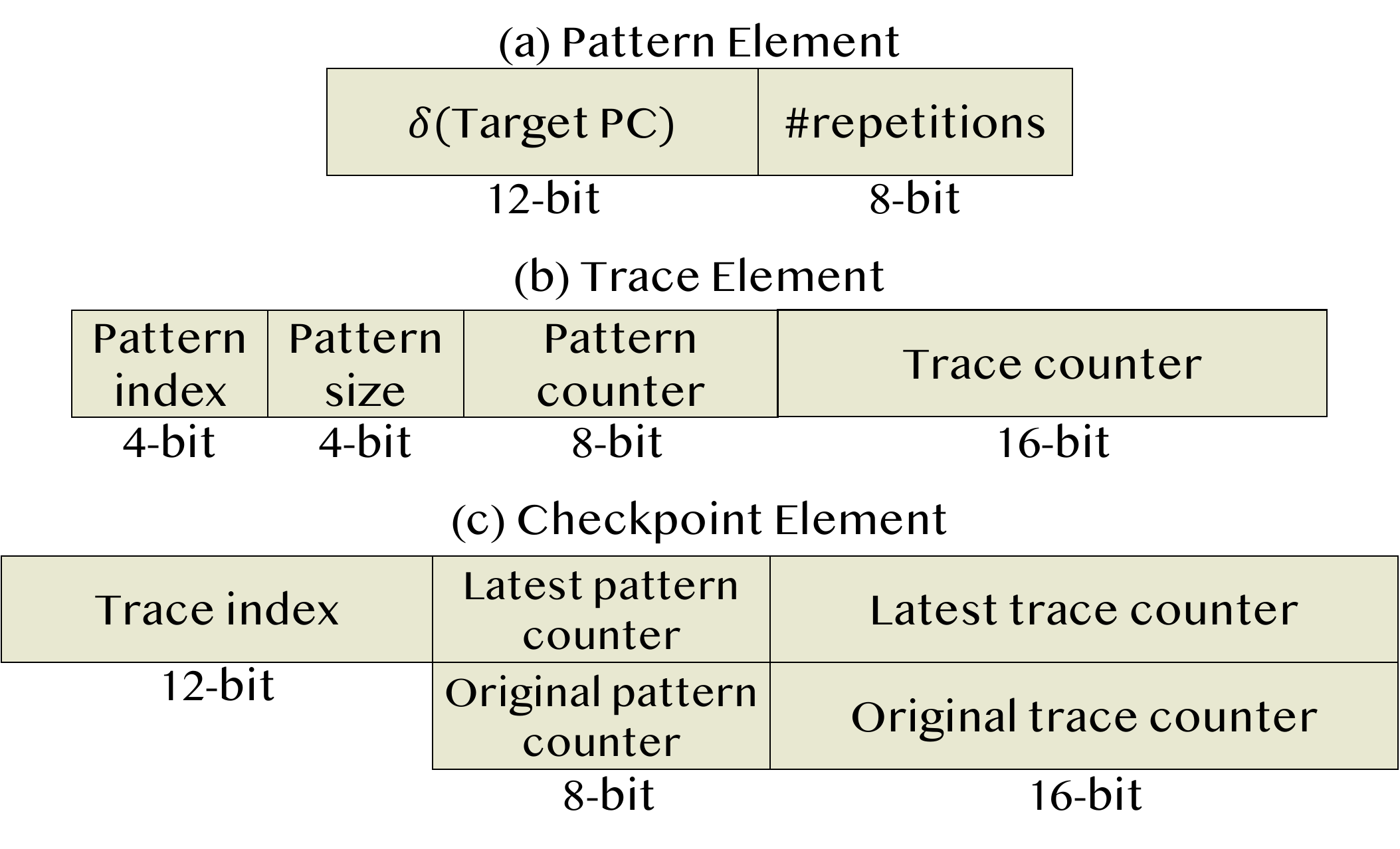}
    \caption{Elements in the \BtUnit{} (\BTU{}). Each entry of the \PatternTable{}, \TraceCache{}, and \CkptTable{}, consisting of 16 elements and corresponds to a static branch.}
    \label{fig:elements}
\end{figure}

\textbf{Embedding hint information}. 
A general approach to inform the hardware about the hints is to insert a special hint instruction before each branch. Hint instructions are only decoded and do not use the ALUs; prior work has used hint instructions for x86~\cite{khan2022whisper} and RISC-V ISAs~\cite{hajiabadi2021noreba}. 
An alternative solution is to re-purpose some of the previously-ignored prefix bytes in x86, in the same way that XRELEASE~\cite{intelLockElision} was implemented, to embed the hint information for each branch (similar to prior work~\cite{invarspec}).
\revision{Fourteen bits can be sufficient per static branch to embed single-target mark (1 bit), address offset (12 bits), and short-trace mark (1 bit).}
We opt to use the latter solution in this work because hint instructions still consume critical frontend resources, even though not executed. Moreover, inserting hint instructions might not provide backward compatibility with older processors.

\textbf{Crypto PC range}. 
We also use a new status register that specify PC ranges for crypto code, called the \textit{Crypto PC Ranges} register, to avoid the penalties of waiting until hint information is decoded.
Note, that crypto branches that hit in the BTU do not require hint information; only rare cases where traces miss in the BTU require decoding hint information to load traces.

\subsection{Details of the Microarchitecture}
\label{sec:uarch}

The \BTU{} consists of three main components:
\begin{itemize}[leftmargin=0.5cm]
    \item \PatternTable{} (\PT{}) holds the pattern sets of branches and each entry consists of 16 pattern elements (see Figure~\ref{fig:elements}(a));
    \item \TraceCache{} (\TC{}) holds the branch traces and each entry consists of 16 trace elements (see Figure~\ref{fig:elements}(b));
    \item \CkptTable{} (\CT{}) always holds the latest valid position of the branch trace, i.e., the committed progress of the trace. Each entry is only one checkpoint element (see Figure~\ref{fig:elements}(c)). \CT{} is stored in data pages which keeps the checkpoints for all branches to handle the \BTU{} evictions and interrupts.
\end{itemize}
In addition, the \CT{} keeps the original counts of the first element of the \TC{} (head of the trace); this helps the \BTU{} to insert a refreshed version of the element at the back of the \TC{} entry for repetition
(see the \textbf{commit flow} for the details of the \CT{} updates).

All three tables are direct-mapped tables, indexed with the branch PC, 
and they are fully inclusive of each other. 
The \BTU{} uses an LRU replacement policy to evict an entry.

\textbf{Crypto fetch flow}. Once a crypto branch is fetched, the fetch unit queries the \BTU{} to determine the next PC (step \flowstep{1} in Figure~\ref{fig:uarch}). If the branch is marked as \textit{single-target}, then the next PC is already known by the hint information and there is no need for a \BTU{} lookup.
For \textit{multiple-target} branches, \BTU{} looks up the first element of the \TC{} to find the appropriate pattern element in the \PT{} which provides the next PC. 
Upon each \BTU{} lookup, the \textit{pattern counter} of the first element in the \TC{} is decremented. 
Whenever the \textit{pattern counter} reaches zero, we advance to the next pattern element by decrementing the \textit{trace counter} and updating the \textit{pattern counter} based on the new pattern element.
As we will explain in the \textbf{crypto commit flow}, the first element of the trace is removed only when the enforced branch direction is committed. Hence, there is a possibility that the \textit{trace counter} of the first element is zero (i.e., we need to advance to the next element) but the branch is not committed yet. In this case, the \BTU{} needs to lookup the next element in the \TC{} entry.
In the worst case that all 16 elements of the \TC{} are looked up and not committed (i.e., \textit{trace counter} is zero in all of them), then the \BTU{} waits until the first element is removed. We did not encounter this scenario in our simulations since crypto branches resolve before all elements are looked up.

\textbf{Non-crypto fetch flow}. For non-crypto branches, 
we use the BPU to determine the next PC (step \flowstep{2}). However, to prevent speculative fetch redirections to the crypto code we perform an integrity check to prevent fetch redirection if the predicted target is part of the crypto code (using the Crypto PC Ranges register). In this case, we wait for the branch to resolve before taking the branch.

\textbf{Crypto commit flow}. Once a crypto branch commits (step \flowstep{3}), if the \textit{trace counter} of the first element in the corresponding \TC{} entry is zero, then the first element is removed and all the other elements are shifted. 
To fill the last element of the \TC{} entry, two cases can happen:
\begin{enumerate}[leftmargin=0.6cm]
    \item if the branch is marked as \textit{short-trace}, a refreshed version of the removed element is inserted at the back of the entry;
    \item if the trace is larger than the \TC{} entry, we prefetch the upcoming elements and insert at the back of the entry. If the last element is an End of Trace marker, we restart from the beginning of the trace.
\end{enumerate}

Additionally, when a crypto branch commits, the latest \textit{pattern counter} and \textit{trace counter} are checkpointed in the \CT{}. This allows the processor to resume the execution when it is interrupted (e.g., in context switches). \textit{Trace index} in the checkpoint element (see Figure~\ref{fig:elements}(c)) points to the latest trace element that the execution needs to resume from.

\textbf{Trace evictions in the \BTU{}}. Once a trace is evicted from the \TC{}, the corresponding entries in the \PT{} and \CT{} are evicted as well. 
Before evicting the \CT{} entry, the checkpoint element is updated with the latest counters and \textit{trace index} and is stored in the memory. This allows the CPU to resume the execution when the evicted branch reappears.

\textbf{Recovery for ROB Squashes}. While \name{} guarantees no branch mispredictions for crypto branches, ROB squashes can still occur due to other reasons (e.g., non-crypto mispredictions or interrupts), and \name{} needs to recover in cases where the crypto branches are squashed. Whenever a crypto branch is squashed, we undo the actions of the \textbf{crypto fetch flow}; the \textit{pattern counter} and \textit{trace counter} of the first elements are incremented according to the checkpointed counters in the \CT{}.

\section{Security Analysis}
\label{sec:security-analysis}

In this section, we aim to provide a detailed analysis of \name{}'s security and precisely identify its protection scope.
We discuss practical deployment and additional considerations to comprehensively block Spectre-type leaks.
We explain the preliminaries in \secref{sec:analysis-preliminaries} before discussing \name{}'s security in \secref{sec:cassandra-analysis}.

\subsection{Preliminaries}
\label{sec:analysis-preliminaries}

Below, we provide the required background and the terminology used for the security analysis of \name{}.

\subsubsection{Security Policies}

Traditionally, developers relied on sequential program execution to enforce security policies in two main application domains: (1) high-assurance cryptography and (2) isolation of untrusted code~\cite{cauligi2022sok}. We briefly discuss the security policies required in these domains.

\textbf{Security policy for high-assurance cryptography}. 
As we discussed in \secref{sec:ct-definition}, cryptographic programs deploy \textit{constant-time programming} to ensure that attacker-visible observations (also referred to as \textit{leakage model}) of the program do not depend on secrets.
The leakage model of constant-time programs captures the control flow, accessed memory addresses, and operands of variable-time instructions (notated as $\CustomInterf{\cdot}{ct}{}$ leakage model\footnote{We borrow the notation and terminology from prior work~\cite{cauligi2022sok,guarnieri2021hardware}.}).

\textbf{Security policy for software isolation}. For software isolation (also referred to as \textit{sandboxing}), 
a host application needs to ensure that untrusted guest code cannot access the host's memory outside an authorized range (e.g., eBPF in Linux kernel).
A common leakage model assumes an adversary observing all architectural computation and accesses, including register file contents ($\CustomInterf{\cdot}{arch}{}$ leakage model). Thus, this security policy requires preventing out-of-bounds memory accesses during execution.

\begin{table*}[t]
  \centering
  \caption{Security analysis of all possible control flow scenarios in \name{} (see Figure~\ref{fig:cassandra-scenarios}).}
  \vspace{-0.3cm}
  \resizebox{\linewidth}{!}{%
  \begin{tabular}{c|l|l|l}
  \toprule
  
   Scenario & \multicolumn{1}{c|}{Transition~$\ddagger$} & \multicolumn{1}{c|}{Execution flow} & \multicolumn{1}{c}{\name{} mechanism}\\
   
  \midrule
  
  \flowstep{1} & 
  \begin{tikzcd}[cramped,sep=small,ampersand replacement=\&]\Mathcryptocode{\footnotesize BR1}\arrow[r,SEQ]\&\Mathcryptocode{\footnotesize R1}\end{tikzcd} &
  $\inlineasm{branch1} \concat \inlineasmOneOperand{leak}{r1}$ & Encforcing sequential flow via looking up pre-computed sequential branch traces (\BTU{})\\ 

  \hline 
  
  \flowstep{2} & 
  \begin{tikzcd}[cramped,sep=small,ampersand replacement=\&]\Mathcryptocode{\footnotesize BR1}\arrow[r,SEQ]\&\Mathcryptocode{\footnotesize M1}\end{tikzcd} &
  $\inlineasm{branch1} \concat \inlineasmTwoOperand{load}{r2}{addr_A} \concat \inlineasmOneOperand{leak}{r2}$ & Encforcing sequential flow via looking up pre-computed sequential branch traces (\BTU{})\\ 

  \hline

  \flowstep{3} & 
  \begin{tikzcd}[cramped,sep=small,ampersand replacement=\&]\Mathcryptocode{\footnotesize BR1}\arrow[r,SEQ]\&\Mathnoncryptocode{\footnotesize R2}\end{tikzcd} &
  $\inlineasm{branch1} \concat \inlineasmOneOperand{leak}{r4}$ & Encforcing sequential flow via looking up pre-computed sequential branch traces (\BTU{})\\ 

  \hline

  \flowstep{4} & 
  \begin{tikzcd}[cramped,sep=small,ampersand replacement=\&]\Mathcryptocode{\footnotesize BR1}\arrow[r,SEQ]\&\Mathnoncryptocode{\footnotesize M2}\end{tikzcd} & 
  $\inlineasm{branch1} \concat \inlineasmTwoOperand{load}{r3}{addr_B} \concat \inlineasmOneOperand{leak}{r3}$ & Encforcing sequential flow via looking up pre-computed sequential branch traces (\BTU{})\\

  \hline

  \flowstep{5} & 
  \begin{tikzcd}[cramped,sep=small,ampersand replacement=\&]\Mathnoncryptocode{\footnotesize BR2}\arrow[r,SEQ]\&\Mathcryptocode{\footnotesize M1}\end{tikzcd} &
  $\inlineasm{branch2} \concat \inlineasmTwoOperand{load}{r2}{addr_A} \concat \inlineasmOneOperand{leak}{r2}$ & Encforcing sequential flow via integrity checks upon non-crypto branches\\

  \hline

  \flowstep{6} & 
  \begin{tikzcd}[cramped,sep=small,ampersand replacement=\&]\Mathnoncryptocode{\footnotesize BR2}\arrow[r,SEQ]\&\Mathcryptocode{\footnotesize R1}\end{tikzcd} &
  $\inlineasm{branch2} \concat \inlineasmOneOperand{leak}{r1}$ & Sequential flow via integrity checks; however, $r1$ is already declassified by the crypto code\\

  \hline

  \flowstep{7} & 
  \begin{tikzcd}[cramped,sep=small,ampersand replacement=\&]\Mathnoncryptocode{\footnotesize BR2}\arrow[r,dotted,NOTCARE]\&\Mathnoncryptocode{\footnotesize R2}\end{tikzcd} &
  $\inlineasm{branch2} \concat \inlineasmOneOperand{leak}{r4}$ & Speculative flow; this is allowed in non-crypto code and {\scriptsize $\CustomInterf{\cdot}{arch}{seq}$} contract\\

  \hline

  \flowstep{8} & 
  \begin{tikzcd}[cramped,sep=small,ampersand replacement=\&]\Mathnoncryptocode{\footnotesize BR2}\arrow[r,dashed,SPEC]\&\Mathnoncryptocode{\footnotesize M2}\end{tikzcd} &
  $\inlineasm{branch2} \concat \inlineasmTwoOperand{load}{r3}{addr_B} \concat \inlineasmOneOperand{leak}{r3}$ & Speculative flow; out of scope (i.e., software isolation)\\
  
  \bottomrule

  \multicolumn{4}{c}{\multirow{2}{*}{$\ddagger$ \cryptocode{\footnotesize A}: crypto code, \noncryptocode{\footnotesize B}: non-crypto code.
    \quad \begin{tikzcd}[cramped,sep=small,ampersand replacement=\&]A\arrow[r,SEQ]\&B\end{tikzcd}: sequential flow,
    \begin{tikzcd}[cramped,sep=small,ampersand replacement=\&]A\arrow[r,dashed,SPEC]\&B\end{tikzcd}: speculative flow, 
    \begin{tikzcd}[cramped,sep=small,ampersand replacement=\&]A\arrow[r,dotted,NOTCARE]\&B\end{tikzcd}: don't care flow.}}
  \end{tabular}
  }
  \label{table:security-analysis}
  \vspace{-0.3cm}
\end{table*}

\subsubsection{Spectre Vulnerabilities}

Before the advent of Spectre, software-level tools assumed a sequential (architectural) execution model (notated as $\CustomInterf{\cdot}{}{seq}$ execution model) to enforce either constant-time or isolation policies. However, modern CPUs follow a speculative execution model (notated as $\CustomInterf{\cdot}{}{spec}$ execution model) that can transiently execute instructions from unintended paths of the program.
Figure~\ref{fig:spectre-leaks} shows transient leaks of the programs that are secure with respect to a sequential execution model.

Sequential execution of both examples in Figure~\ref{fig:spectre-leaks} are constant-time (i.e., they are secure according to a $\CtSeqInterf{\cdot}$ contract\footnote{The combination of a leakage model $\interfStyle{\beta}$ and an execution model $\interfStyle{\alpha}$ expresses a $\CustomInterf{\cdot}{\beta}{\alpha}$ contract that governs the attacker-visible observations of a program during execution.}).
Nevertheless, the code in Figure~\ref{fig:spectre-leaks}(a) presents a transient leak of register $r1$ 
(through a register leak gadget \cryptocode{R2}) which contains secret data that never leaks during sequential execution. Similarly, Figure~\ref{fig:spectre-leaks}(b) transiently accesses a secret memory region and leaks the value loaded in register $r2$ (through memory leak gadget \cryptocode{M1}). 

Moreover, Figure~\ref{fig:spectre-leaks}(b) demonstrates that the speculative execution model of CPUs can violate software isolation policies as well through transient execution of memory leak gadgets; it leaks a memory location through \cryptocode{M1} gadget which is not accessed during sequential execution (i.e., it is secure against a $\ArchSeqInterf{\cdot}$ contract). 

After the advent of Spectre, many software tools were developed to capture speculative execution semantics and extend constant-time policies to potentially transient program paths~\cite{cauligi2020constant,guarnieri2020spectector,daniel2021hunting,barthe2021high,vassena2021blade,mosier2024serberus,pescosta2021bounded,wu2019abstract,oleksenko2020specfuzz,wang2020kleespectre,guo2020specusym,qi2021spectaint,wang2019oo7}. However, we argue that software should only consider sequential (non-speculative) semantics, with hardware ensuring no additional leaks beyond those intended by the software. The rationale for this argument is that reasoning about transient execution at the software level requires complete microarchitectural knowledge, which is not public for widely deployed CPUs, making such protections vulnerable to unknown speculation mechanisms. Moreover, software protections lack the performance flexibility of hardware mechanisms, relying on costly measures like fences to block potential transient leaks. 
Hence, \textit{we present \name{} as a CPU enhancement that allows developers to write high-assurance cryptographic code using the standard set of constant-time programming rules, without requiring additional programming effort and software-level considerations regarding speculation}.

\begin{figure}
    \centering
    \includegraphics[trim=0 1.2cm 0 0,width=\linewidth]{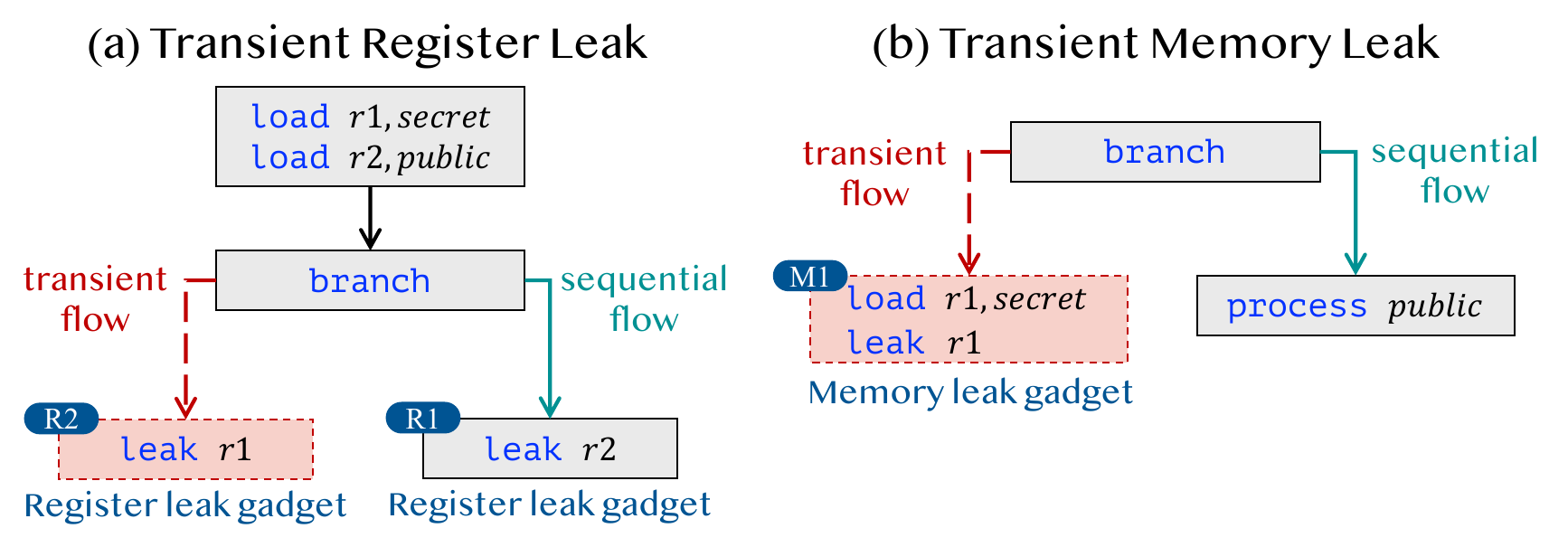}
    \caption{
    (a) Transient register leak, (b) transient memory leak. Both cases are constant-time during sequential execution, but violated during transient execution. 
    }
    \label{fig:spectre-leaks}
    \vspace{-0.55cm}
\end{figure}

\subsection{Security Analysis of \name{}}
\label{sec:cassandra-analysis}

The security goal of \name{} is to guarantee that (1) all executed paths after crypto branches are on the sequential path, and (2) all crypto leak gadgets execute on the sequential path. In other words, (1) all outgoing edges from \cryptocode{BR1} in Figure~\ref{fig:cassandra-scenarios} must follow the sequential (non-speculative) flow of the program, and (2) all incoming edges to \cryptocode{M1} and \cryptocode{R1} are on the sequential path.

Figure~\ref{fig:cassandra-scenarios} illustrates all possible scenarios that capture the execution of both crypto code (secure against the $\CustomInterf{\cdot}{ct}{seq}$ contract) and non-crypto code (secure against the $\CustomInterf{\cdot}{arch}{seq}$ contract) and their interaction in a \name{}-enabled processor. 
We will discuss these scenarios to explain \name{}'s security and protection scope (crypto gadgets are highlighted as \cryptocode{blue} and non-crypto gadgets are highlighted as \noncryptocode{orange}). 
Our security analysis is summarized in Table~\ref{table:security-analysis}, indicating \name{} mechanisms for each scenario.

\textbf{Scenarios \flowstep{1} and \flowstep{2}}
(\begin{tikzcd}[cramped,sep=small]\Mathcryptocode{BR1}\arrow[r,SEQ]&\Mathcryptocode{M1}~\Mathcryptocode{R1}\end{tikzcd}):
These are the scenarios where crypto leak gadgets execute after a crypto branch (i.e., leakage of register $r1$ and memory $addr_A$ after $\inlineasm{branch1}$).
\name{} guarantees sequential execution for these scenarios by looking up the sequential control flow trace of $\inlineasm{branch1}$.

\textbf{Scenario~\flowstep{3}} 
(\begin{tikzcd}[cramped,sep=small]\Mathcryptocode{BR1}\arrow[r,SEQ]&\Mathnoncryptocode{R2}\end{tikzcd}):
This is the scenario where a non-crypto register leak gadget executes after a crypto branch (i.e., leakage of register $r4$ after \inlineasm{branch1}). 
\name{} guarantees sequential execution for this scenario by looking up the sequential control flow trace of $\inlineasm{branch1}$.
Note, that the leakage of $r4$ is intentional, as its content should already have been declassified (i.e., made public) before transitioning to unsafe, non-crypto code.

\textbf{Scenario~\flowstep{4}} 
(\begin{tikzcd}[cramped,sep=small]\Mathcryptocode{BR1}\arrow[r,SEQ]&\Mathnoncryptocode{M2}\end{tikzcd}):
This is the scenario where a non-crypto memory leak gadget executes after a crypto branch (i.e., leakage of memory $addr_B$ after \inlineasm{branch1}). \name{} guarantees sequential execution for this scenario by looking up the sequential trace of $\inlineasm{branch1}$.

\textbf{Scenario~\flowstep{5}} 
(\begin{tikzcd}[cramped,sep=small]\Mathnoncryptocode{BR2}\arrow[r,SEQ]&\Mathcryptocode{M1}\end{tikzcd}):
This is the scenario where a crypto memory leak gadget executes after a non-crypto branch (i.e., leakage of memory $addr_A$ after \inlineasm{branch2}).
While a \name{}-enabled processor predicts the outcome of non-crypto branches, we perform an integrity check to not speculatively redirect fetch to crypto code (discussed in \secref{sec:uarch} in the \textbf{non-crypto fetch flow}). 
\name{} guarantees sequential execution for this scenario by stalling fetch until the non-crypto branch resolves.

\textbf{Scenario~\flowstep{6}} 
(\begin{tikzcd}[cramped,sep=small]\Mathnoncryptocode{BR2}\arrow[r,SEQ]&\Mathcryptocode{R1}\end{tikzcd}):
This is the scenario where a crypto register leak gadget executes after a non-crypto branch (i.e., leakage of register $r1$ after \inlineasm{branch2}).
\name{} guarantees sequential execution for this scenario similar to scenario~\flowstep{5} through integrity checks.
Note, that the content of register $r1$ is public since crypto programs declassify registers before transitioning to unsafe, non-crypto code. In other words, transient execution of this scenario would not have leaked any secrets as well.

\textbf{Scenario~\flowstep{7}} 
(\begin{tikzcd}[cramped,sep=small]\Mathnoncryptocode{BR2}\arrow[r,dotted,NOTCARE]&\Mathnoncryptocode{R2}\end{tikzcd}):
This is the scenario where a non-crypto register leak gadget executes after a non-crypto branch (i.e., leakage of register $r4$ after \inlineasm{branch2}).
\name{} allows speculative execution for this scenario.
Note, that transient execution of this scenario does not violate software isolation guarantees of the non-crypto code (i.e., it still satisfies the $\CustomInterf{\cdot}{arch}{seq}$ contract).

\begin{figure}
    \centering
    \includegraphics[trim=0 1.4cm 0 0,width=0.85\linewidth]{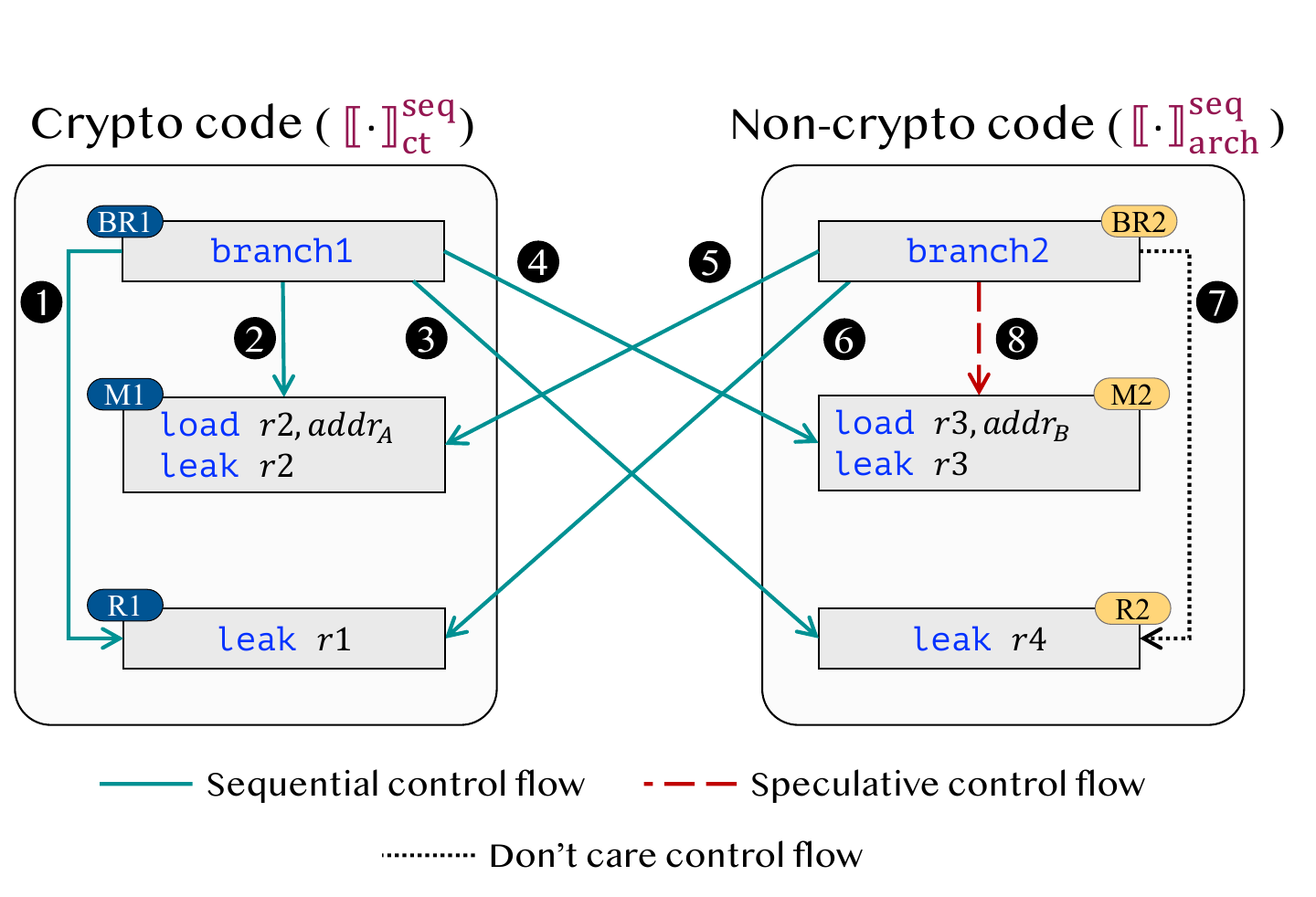}
    \caption{All possible control flows in a \name{}-enabled processor. \name{} guarantees that (1) all outgoing edges from \captioncryptocode{BR1} follow the sequential flow, and (2) all incoming edges to \captioncryptocode{M1} and \captioncryptocode{R1} are on the sequential path.}
    \label{fig:cassandra-scenarios}
\end{figure}

\textbf{Scenario~\flowstep{8}} 
(\begin{tikzcd}[cramped,sep=small]\Mathnoncryptocode{BR2}\arrow[r,dashed,SPEC]&\Mathnoncryptocode{M2}\end{tikzcd}):
This is the scenario where a non-crypto memory leak gadget executes after a non-crypto branch (i.e., leakage of memory $addr_B$ after \inlineasm{branch2}).
\name{} allows speculative execution for this scenario.
However, transient execution of this scenario can violate software isolation guarantees of the non-crypto code. Preventing this leakage is out of the scope of \name{}.
Since violating software isolation can potentially leak arbitrary memory locations (e.g., secret keys),
we expect a \name{}-enabled system to provide a level of isolation for crypto applications (e.g., through lightweight isolation techniques that prevent Spectre~\cite{schwarzl2022robust,reis2019site}).
Ideally, \name{} can be integrated with a defense that provides secure speculation for sandboxing policy (e.g., STT~\cite{yu2019speculative}, DOLMA~\cite{loughlin2021dolma}, and Levioso~\cite{hajiabadi2024levioso}) to comprehensively block Spectre-type attacks for both constant-time and software isolation. Integration of \name{} with other defenses is straightforward; the only consideration is that \textit{crypto branches do not induce a speculation window}, and only speculatively loaded data under the speculation window of non-crypto branches need protection (i.e., scenario~\flowstep{8}).

\begin{figure*}
    \centering
    \includegraphics[trim=0 1cm 0 0,width=0.9\linewidth]{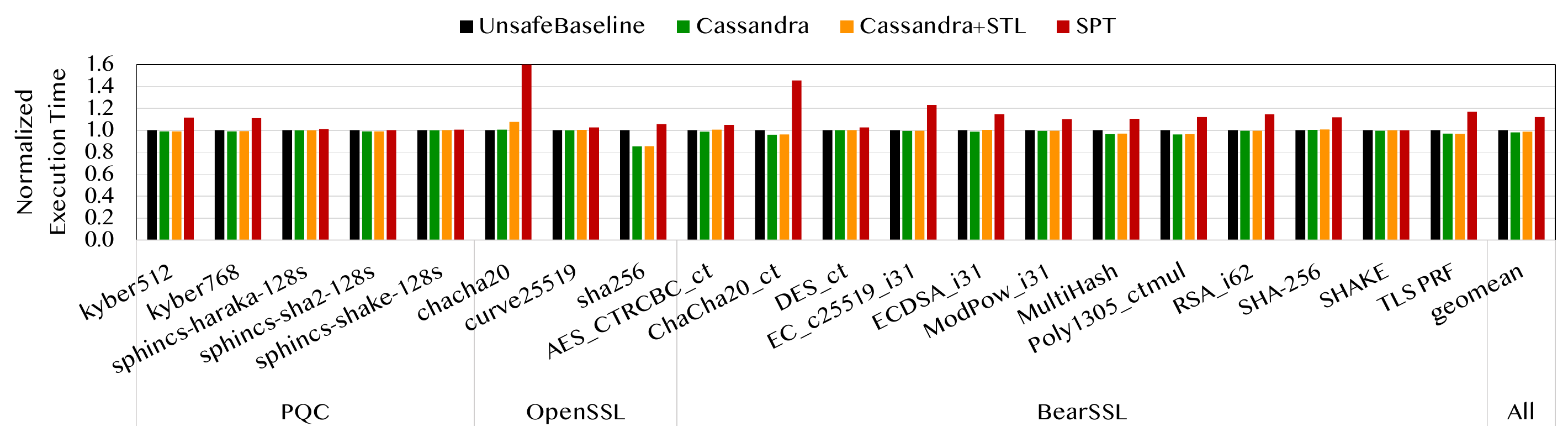}
    \caption{Execution time of different designs normalized to the \baseline{}. A higher bar means higher slowdown.
    }
    \label{fig:perf-bearssl}
\end{figure*}

\textbf{Formal security analysis}.
We provide a formalization of \name{} in \secref{sec:formal-analysis}, where we demonstrate an interesting use of hardware-software contracts~\cite{guarnieri2021hardware}. 
While hardware-software contracts are mostly used to infer contracts for existing defenses, we provide \textit{contract-informed hardware semantics} with a $\CtSeqInterf{\cdot}$ contract in mind as a clean-slate design (i.e., all fetch directions strictly follow the contract trace). Hence, the contract satisfaction proof for $\CtSeqInterf{\cdot}$ is a direct result of our contract-informed semantics. We show that our key innovations in trace compression and microarchitecture enable a performant implementation of \name{}'s semantics for cryptographic applications.

\section{Evaluation}
\label{sec:evaluation}

\subsection{Experimental Setup}
\label{sec:setup}

\textbf{Simulation}. We implement \name{} on top of the gem5 OoO core implementation and evaluate the design using Syscall Emulation (SE) mode and x86 ISA.
Table~\ref{table:gem5-config} shows the system configuration (a Golden-Cove-like microarchitecture~\cite{golden-cove}).
We use McPAT 1.3~\cite{li2013mcpat} and CACTI 6.5~\cite{li2011cacti} to investigate the power and area impacts.

\textbf{Workloads}. We evaluate test applications from the BearSSL~\cite{bearssl} and OpenSSL~\cite{openssl} libraries, alongside \textit{reference} implementations of two post-quantum crypto programs: Kyber~\cite{kyber} and SPHINCS+~\cite{sphincs}.
For the applications with more than 1B instructions, we used SimPoint~\cite{sherwood2002automatically} to generate representative regions for practical simulation time-frames (an average of 6 SimPoints per application and 50M instructions per region). 
In \secref{sec:synthetic-bench}, we evaluate the SpectreGuard synthetic benchmarks~\cite{fustos2019spectreguard} as a mix of crypto and non-crypto code.
Moreover, we used gem5 itself to collect branch traces for \name{}, however, other tools can be used as well (e.g., Intel Pin~\cite{luk2005pin} and DynamoRIO~\cite{bruening2004dynamorio}).

\begin{table}[t]
    \centering
    \caption{gem5 configuration for simulation.}
    \vspace{-0.3cm}
    \resizebox{0.95\linewidth}{!}{%
    \begin{tabular}{l|l}
    \toprule
     Pipeline & 8 F/D/I/C width, 192/114 LQ/SQ entries, 512 ROB\\ 
              & entries, 96 IQ entries, 280/332 RF (INT/FP), LTAGE BPU\\
     \BTU{}   & 16 \PT{}/\TC{}/\CT{} entries (\storageKB{} storage)\\
     \midrule
     L1 DCache    & 48 KB, 64 B line, 12-way, 5-cycle latency \\
     L1 ICache    & 32 KB, 64 B line, 8-way, 5-cycle latency \\
     L2 Cache     & 1280 KB, 64 B line, 16-way, 14-cycle latency \\
     L3 Cache     & 30 MB, 64 B line, 16-way, 40-cycle latency \\
    \bottomrule
    \end{tabular}
    }
    \label{table:gem5-config}
\end{table}

\subsection{Cryptographic Benchmarks Performance}
\label{sec:perf-results}

We evaluate four different designs in this section:
\begin{itemize}[leftmargin=0.5cm]
    \item \baseline{}: unprotected baseline OoO processor, vulnerable to control flow and data flow speculation;
    \item \name{}: our design; addressing control flow speculation;
    \item \stl{}: an extension of \name{} that addresses data flow speculation as well; it always sends a request to memory even if there is a load-store address match, and also restricts the dependents of bypassing loads until prior stores resolve, similar to prior work~\cite{loughlin2021dolma,choudhary2021speculative};
    \item \spt{}: a prior hardware-only defense~\cite{choudhary2021speculative}.
    We use their proposed settings for the Spectre attack model.
\end{itemize}

Figure~\ref{fig:perf-bearssl} shows normalized execution time of the evaluated applications with different designs.
\name{} \textit{improves} performance compared to the \baseline{} by \ourPerf{} on average. This is mainly because of the elimination of prediction for crypto branches, and as a result, no ROB squashes and penalties occur due for mispredicting crypto branches.

In addition, the results show that extending \name{} to protect data flow speculation (i.e., \stl{}) achieves a speedup of \stlPerf{}, which demonstrates that \name{} can still improve performance due to easy-to-resolve address computations in crypto primitives.

Finally, \spt{} shows a \sptPerf{} slowdown compared to the \baseline{}, and a \sptPerftoOurs{} slowdown compared to the \name{}. \spt{} has low overheads for some applications,
but the overheads can be significantly higher, up to a \sptChaChaPerf{} slowdown for OpenSSL \texttt{chacha20}, while \name{} improves performance by \oursShaPerf{} for OpenSSL \texttt{sha256} compared to the \baseline{}.

\subsection{Synthetic Benchmarks Performance}
\label{sec:synthetic-bench}

In this section, we aim to compare \name{} with \prospect{}, the state-of-the-art secure speculation for constant-time programs.

\textit{Implementation}. \prospect{} annotates secret memory regions, and only blocks speculative execution of an instruction if it is about to leak a secret. 
We implement \prospect{} in gem5 and block execution under two conditions: (1) the instruction is speculative (i.e., there is an older, unresolved control inducer), and (2) the instruction is about to process a secret (i.e., one or more operands are tainted). Destination registers of loads from secret memory regions are taint sources that are propagated during execution.
Also, we declassify all registers at the end of crypto primitives.

\textit{Workloads}. We evaluate the synthetic benchmark from SpectreGuard~\cite{fustos2019spectreguard},
which is a mix of non-crypto, (\texttt{s})andboxed code, and (\texttt{c})rypto code (\texttt{s}/\texttt{c} indicates the fraction of each part).
We evaluate two crypto primitives: (1) HACL* \texttt{chacha20}~\cite{zinzindohoue2017hacl} (similar to \prospect{}), and (2) \texttt{curve25519}-donna~\cite{curve25519-donna}\footnote{We use the secret annotations provided in \url{https://github.com/proteus-core/prospect/}}. 
The main difference between these two crypto primitives is that \texttt{chacha20} does not spill secret variables to the stack, while \texttt{curve25519} spills both secret and public variables to the stack which means we need to label the stack as a secret memory region\footnote{Note that we use a different setup and compiler compared to \prospect{} which impacts the stack spills; we use the Clang v14.0.4 compiler for an x86 target, while \prospect{} uses riscv-gnu-toolchain for a RISC-V target.}. 

We evaluate two designs: (1) \prospect{}~\cite{daniel2023prospect}, 
and (2) \name{}+\prospect{}. Since \name{} only enforces sequential execution for crypto branches, we still leave \prospect{} enabled alongside \name{} to prevent transient memory leaks of annotated secret regions during the non-crypto component (i.e., senario~\flowstep{8} in Figure~\ref{fig:cassandra-scenarios}). 
Although \prospect{} is not specifically designed as a general solution for software isolation since non-crypto applications do not have a clear notion of secret annotation similar to crypto applications, and all architecturally out-of-bounds memory accesses are confidential.
Figure~\ref{fig:perf-synthetic-bench} shows the performance impacts of \prospect{} and \name{}+\prospect{}.

\begin{figure}
    \centering
    \includegraphics[trim=0 1cm 0 0,width=\linewidth]{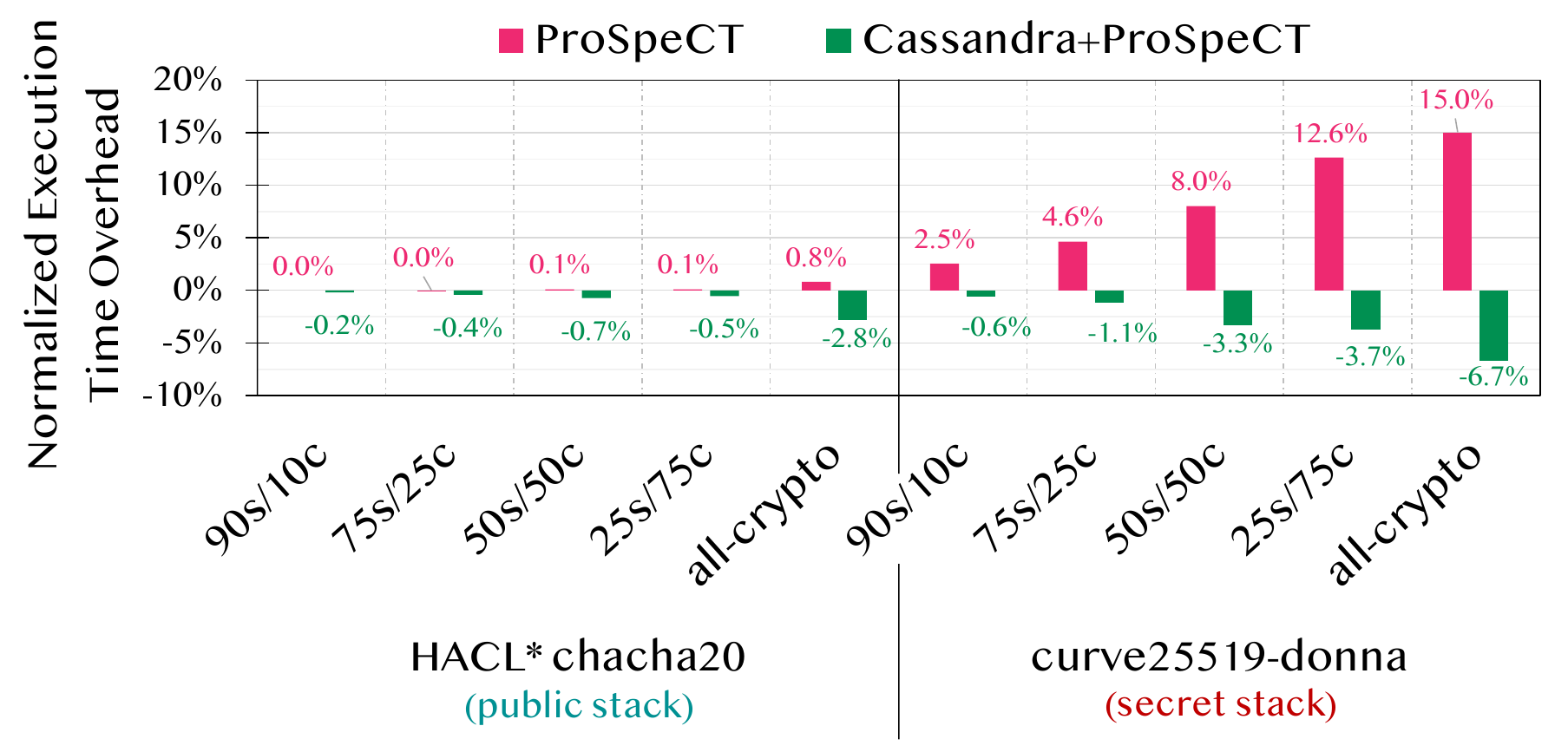}
    \caption{Execution time of \prospect{} and \name{}, normalized to the respective \baseline{} of each benchmark. 
    Negative numbers mean speedup. 
    The stack is marked as public in \texttt{chacha20}, and secret in \texttt{curve25519}.
    }
    \label{fig:perf-synthetic-bench}
    \vspace{-0.5cm}
\end{figure}

Both \prospect{} and \name{} have marginal impact on the performance for all benchmark configurations of \texttt{chacha20} (\name{} shows \ourPerfChachaAllCrypto{} improvement and \prospect{} has only \ProspectChachaAllCrypto{} slowdown for the \texttt{all-crypto} case).
However, \prospect{} experiences a significant slowdown for \texttt{curve25519};
\prospect{} incurs a slowdown between \ProspectCurveNinetyS{} and \ProspectCurveAllCrypto{} when increasing the crypto fraction of the workload.
Interestingly, \name{} provides more speedup when increasing the crypto fraction of the workload, from \ourPerfCurveNinetyS{} to \ourPerfCurveAllCrypto{}.
The main reasons for \name{}'s improvements are: 
(a) \name{} does not induce any control flow speculation for the crypto component that benefits from always-correct fetch redirections (i.e., no penalties for crypto branch mispredictions and squashing cycles), and
(b) it does not restrict the execution for the crypto component, unlike \prospect{} that needs to label the stack as secret for more complex primitives like \texttt{curve25519} and restrict speculative execution of the instructions that potentially process secrets.

\textit{Summary}. Our experiments demonstrate that the worst-case for \name{} is marginal/no performance improvement while \prospect{} can experience significant slowdown for complex crypto applications. 
In addition, \prospect{}'s performance relies on (i) manual programming efforts to precisely annotate secret memory regions, and (ii) how variables are handled with respect to stack spills which is not trivial to manually isolate secret and public variables. 
\prospect{} requires a compiler pass to be aware of secret/public annotations when spilling values to the stack to limit the performance overheads.
In addition, the compiler might need to co-locate secrets in memory to avoid slowdown due to caching effects~\cite{daniel2023prospect}.
On the other hand, \name{} does not need any programming effort or new compiler passes to achieve its full performance potential.

\subsection{Power and Area Impacts}
\label{sec:power-impacts}

Figure~\ref{fig:power} shows the power and area of \name{} compared to \baseline{}. 
The results show that \name{} can reduce the power consumption compared to the \baseline{} by \ourPower{}. The reason is that crypto branches avoid accessing and updating the BPU and access \BTU{} as a smaller and simpler unit (see the reduction in \textit{Instruction Fetch Unit}). 
Our results confirm that \name{} will not add power overheads, nevertheless, the benefits might not be as high when combined with non-crypto workloads.
Finally, \BTU{} has an area overhead of \ourArea{}.

\subsection{Upfront Trace Generation Runtime}
\label{sec:trace-gen-results}

We have evaluated the analysis time for each step of the trace generation procedure (steps~\flowstepletter{A}-\flowstepletter{E} in Algorithm~\ref{algo:trace-generation}). 
We use Intel Pin~\cite{luk2005pin} for dynamic analysis and gathering \raw{} traces.
Branch detection~\flowstepletter{A} is executed once per application and it takes 388 seconds on average. 
Steps~\flowstepletter{B}-\flowstepletter{E} are executed per branch. 
The results show that collecting \raw{} traces (step~\flowstepletter{B}) takes 14 seconds on average per branch and \tandem{} compression (step~\flowstepletter{E}) takes only 3 seconds.

\begin{figure}
    \centering
    \includegraphics[trim=0 1cm 0 0,width=0.8\linewidth]{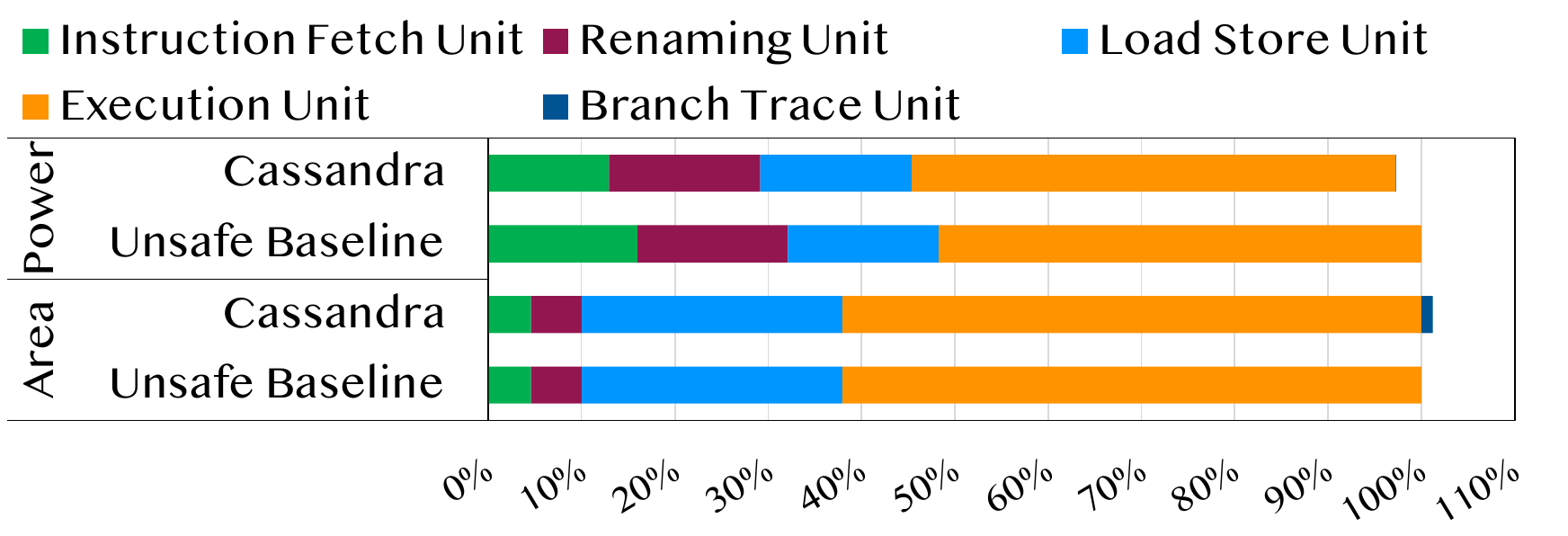}
    \caption{Power and area of \name{}, normalized to the total power and area of the \baseline{}.}
    \label{fig:power}
    \vspace{-0.4cm}
\end{figure}

\section{Discussion}

\textbf{\questionN{1}: Does \name{} handle branches influenced by public parameters?}
Constant-time rules mandate all branches to be independent of confidential inputs. However, they can depend on public information.
In \secref{sec:trace-generation}, we discussed how to handle branches that their traces change in different runs (e.g., stream loops in stream ciphers).
In addition, some branches are influenced by public parameters of the primitive that are specified 
by standards and underlying algorithms which do not change during execution. Hence, \name{} would still generate traces for such branches. However, in some cases, different recommended modes exist for the same application (e.g., key sizes of 128, 192, and 256 for AES). One possible solution for these cases is to generate separate traces for each mode and embedding all of them in the binary. A status register is set to specify the mode before execution and when combined with the hint information it allows accessing the proper traces.
An alternative solution is to generate separate binaries for each mode.

\textbf{\questionN{2}: Who will provide branch traces and when?}
Traces need to be re-generated after each compilation only if PCs change. We believe developers can generate traces for recommended modes (e.g., AES-128/192/256) 
and embed hint information in binaries using our automated tool (\secref{sec:trace-generation}). However, users can also generate traces for legacy binaries of cryptographic programs they intend to 
run on a \name{}-enabled processor with the same procedure.

\textbf{\questionN{3}: What are the benefits and limitations of \name{} only handling single-target branches?}
As we discussed in \secref{sec:trace-communication}, many branches in cryptographic programs are \textit{single-target} (they always jump to the same target). \name{}'s branch analysis marks single-target branches and does not consume the \BTU{} entries for such branches. A lightweight version of \name{}, called \lightVersion{}, can only support single-target branches and stall fetch for multi-target branches until they resolve (i.e., \lightVersion{} does not need to implement a \BTU{}).
Our evaluations show that \lightVersion{} incurs 2.7\%, 6.7\%, and 4.7\% slowdown over \name{} in BearSSL, OpenSSL, and PQC programs, respectively. Some workloads see significant slowdowns (22\% for OpenSSL \texttt{sha256}, 8\% for \texttt{kyber512}) since limiting \name{} to single-target branches diminishes benefits for conditional branches and returns, where even BPUs struggle. However, \name{} ensures equal or better performance across complex, widely used applications via near-perfect control-flow (e.g., 14.5\% speedup for OpenSSL \texttt{sha256} over an \baseline{} using the BPU). While \lightVersion{} performs better than \spt{}, it lacks \name{}’s performance reliability and improvement and it introduces high overhead for key applications like OpenSSL \texttt{sha256} and \texttt{kyber512}.

\textbf{\questionN{4}: How are interrupts handled?}
In \name{}, the OS does not need to store/reload the \BTU{} content during timer interrupts, however, it will flush the BTU if there is a context switch between two different crypto applications (note, that the BTU content is not secret and it is not required to flush it if there is context switch with a non-crypto application). To assess the upper-bound performance impact of flushing the \BTU{}, we flush the \BTU{} at the frequency of 250Hz~\cite{linuxKconfig}; this reduces the performance improvement of \name{} from \ourPerf{} to only \ourPerfTwoFiftyHzFlush{}.

\textbf{\questionN{5}: Does the \BtUnit{} (\BTU{}) introduce a new timing side channel?}
\BTU{} only contains and caches the sequential control flow trace of the program; constant-time principles assume that this trace is already leaked (e.g., ICache has the same leakage), and they guarantee it is completely independent from secrets.

\section{Related Work}
\label{sec:related-work}

In this section, we summarize prior work that provide protection for constant-time code against speculative execution attacks. We categorize these solutions into three classes: (1) hardware-only defenses, (2) software-level defenses, and (3) hardware/software co-designed defenses.

\textbf{(1) Hardware-only defenses}. 
\removeRevision{
Early defenses to protect against Spectre were either protecting a specific transmission channel (e.g., data caches)~\cite{yan2018invisispec,saileshwar2019cleanupspec} or only preventing speculative secrets~\cite{yu2019speculative,weisse2019nda}.
Recent works propose more restrictive defenses to protect non-speculative secrets as well~\cite{choudhary2021speculative,loughlin2021dolma}.
}
Prior works have investigated hardware defenses
to protect constant-time programs~\cite{choudhary2021speculative,loughlin2021dolma}. 
These defenses are complex to design as they need to track speculation taints in all potential microarchitectural components which can also incur high performance overheads due to limited knowledge about the running applications and their security policy. \name{} only adds a small structure (i.e., \BTU{}), which has better performance and consumes less power compared to the baseline, with modest modifications in the microarchitecture.

\textbf{(2) Software-level defenses}. To harden programs on existing CPUs, compiler passes have been developed that take the speculative execution model of CPUs into account and insert protections as needed~\cite{zhang2023ultimate,shivakumar2023typing,choudhary2023declassiflow}. However, software-level defenses usually result in prohibitive slowdowns.
Serberus~\cite{mosier2024serberus} is a compiler mitigation that relies on existing enforcement mechanisms in the CPUs to
mitigate all known speculation primitives.
Serberus introduces a 21\% slowdown on average, and a 8\% slowdown on large-buffer benchmarks (8KB buffers). 
In our evaluation results, a buffer size of 4KB is used in the synthetic benchmark, and the default buffers in BearSSL tests (e.g., \texttt{ChaCha20} uses a 400B buffer)~\cite{bearssl}.

\textbf{(3) Hardware/software co-designed defenses}. Similar to \name{}, some prior defenses propose hardware/software modifications for efficient Spectre defenses~\cite{schwarz2020context,daniel2023prospect,taram2019context,fustos2019spectreguard,yu2019data,hajiabadi2024levioso}. For example,
ProSpeCT~\cite{daniel2023prospect} manually annotates secrets in the program and blocks speculative execution for instructions that process secrets. We provide a detailed comparison with \name{} in \secref{sec:synthetic-bench}. 

\textbf{Profile-guided branch analysis}. 
There have been proposals to use runtime profiles of applications 
to eliminate branch mispredictions~\cite{khan2022whisper,zangeneh2020branchnet,jimenez2001boolean}. 
These techniques mainly target datacenter applications because they have large code footprints and frequent branch mispredictions.
For example, Whisper~\cite{khan2022whisper} proposes a profile-guided approach that provides hints per static branch to help the branch predictor avoid mispredictions.
However, the goal of these works is to build \textit{approximately} accurate branch histories for better prediction, 
but still rely on speculation.

\section{Conclusion}
\label{sec:conclusion}

In this work, we propose \name{}, a novel hardware/software mechanism to enforce sequential execution for constant-time cryptographic programs.
To achieve this, we perform upfront branch analysis to significantly compress sequential branch traces 
which allows an efficient communication with the hardware.
During execution, the processor uses the sequential branch traces to determine fetch directions and to avoid accessing the branch predictor.
Moreover, \name{} can be easily integrated with other mechanisms for software isolation and guaranteeing secure speculation for sandboxed programs.
Despite providing a strong security guarantee, \name{} counterintuitively improves performance by \ourPerf{}.

\begin{acks}
We thank the anonymous reviewers for their insightful comments, which contributed to enhancing the final version of this paper. We also thank Arash Pashrashid and Kaveh Razavi for their valuable discussions and feedback.
\end{acks}

\appendix

\section{Appendix: A Formalization of \name{}}
\label{sec:formal-analysis}

We provide a formalization of \name{} on top of prior work~\cite{guarnieri2021hardware} and express its formal security guarantees. Informally, we first choose a strong security contract and then ensure that the hardware semantics govern that all produced observations agree with the contract. We refer to this approach as \textit{contract-informed hardware semantics}. While many works try to infer contracts for an existing microarchitecture, we use contracts for a clean-slate design of our microarchitecture, starting with a strong contract. Our key observations from cryptographic programs and innovations in trace compression and microarchitectural design enable an efficient implementation of \name{}'s semantics.

\secref{sec:formal-background} provides the background on hardware-software contracts~\cite{guarnieri2021hardware} as our baseline framework.
We specify \name{} semantics in \secref{sec:semantics} and prove its security in \secref{sec:theorems}.

\subsection{Hardware-Software Contracts Preliminaries}
\label{sec:formal-background}

\subsubsection{ISA Language}
\label{sec:isa}

We rely on the \srclang{} language, a small assembly-like language, with the following syntax:
\begin{center}
	\begin{tabular}{llcl}
	(Expressions) 	&  $e$		& $:=$ & $n \mid x \mid \unaryOp{e} \mid \binaryOp{e_1}{e_2}$ \\
	(Instructions)	&  $i$ 		& $:=$ & $\passign{x}{e} \mid \pload{x}{e} \mid \pstore{x}{e}$ \\
	& & &  $\mid \pcall{f} \mid  \pjz{x}{\lbl} \mid \pret{} $ \\ 
    (Functions) 	&  $\mathcal{F}$		& $:=$ & $\varnothing \mid \pseq{\mathcal{F}}{f \mapsto n}$\\
    (Crypto Tags) & $t$ & $:=$ & $\cryptotag \mid \varepsilon$\\
	(Programs)	&  $p$		& $:=$ & $i @ t \mid \pseq{p_1}{p_2} $%
	\end{tabular}
\end{center}
where $x \in \Var$ and $n, \lbl \in \Val = \Nat \cup \{\bot\}$. $\pc$ refers to a special register that contains the program counter.
In addition, an architectural state $\sigma = \tup{m,a}$ consists of the memory $m: \Val \to \Val$, and register assignment $a: \Var \to \Val$.
Each instruction has a tag $t$ that specifies if it is a crypto instruction and analyzed by \name{}; i.e., crypto instructions are tagged as $\cryptotag$ and the rest are considered to be untagged (specified by $\varepsilon$).

\subsubsection{Contracts}
\label{sec:contracts}

A contract governs the attacker-visible observations of a given program. A contract $\CustomInterf{\cdot}{\beta}{\alpha}$ has two main components:
\begin{itemize}[leftmargin=0.5cm]
    \item \textit{Execution model} \textcolor{contractColor}{$\alpha$} specifies how state transitions occur. For example, the sequential model (denoted as $\interfStyle{seq}$) evaluates the branch condition before transitioning to the next state, while a speculative model (denoted as $\interfStyle{spec}$) predicts the target.
    \item \textit{Leakage model} \textcolor{contractColor}{$\beta$} specifies the leakages that are observable by an attacker. For example, the constant-time leakage model (denoted as  $\interfStyle{ct}$) leaks the control flow and memory addresses.
\end{itemize}
Contract semantics
$\contractTrans{\tau_n}$ is labeled with the \textit{observations} $\tau_n$ when transiting between two architectural states. Observations $\tau_i$ capture leaks via cache and control flow:
\begin{center}
\begin{tabular}{rcl}
    $\cfObs$ & $:=$ & $\pcObs{n} \mid \callObs{f} \mid \retObs{n}$\\
    $\memObs$ & $:=$ & $\loadObs{n} \mid \storeObs{n}$ \\
    $\Obs$ & $:=$ & $\memObs \mid \cfObs$\\
    $\tau:=\varepsilon \mid \Obs$ & &
    $\overline{\tau}:=$ $\emptyset \mid \overline{\tau} \concat \tau @ t$
\end{tabular}
\end{center}
The $\pcObs{n}$, $\callObs{f}$, and $\retObs{n}$ observations record the control flow of the program (denoted as $\cfObs$).
The $\loadObs{n}$ and $\storeObs{n}$ observations record the memory addresses to capture cache leakage (denoted as $\memObs$). 
In addition, observations are tagged with the same crypto tag of the instruction that generates the observation.

For a given program $p$ and initial architectural state $\sigma_0$, the labels of the transitions in run $\sigma_0 \contractTrans{\tau_1 @ t_1} \sigma_1 \contractTrans{\tau_2 @ t_2} \dots \contractTrans{\tau_n @ t_n} \sigma_n$ produce the contract trace $\interfSem{p}(\sigma_0) = [\tau_1@ t_1 \concat \dots \concat \tau_n@ t_n]$.

\textbf{Contract $\CtSeqInterf{\cdot}$}. This contract specifies the strongest security guarantee for secure speculation mechanisms (i.e., sequential execution model for constant-time leakages). For example, two rules of $\CtSeqInterf{\cdot}$ contract are as follows:
\begin{mathpar}[\small]
	\inferrule[(Beqz-Sat)]
	{
	\select{p}{a(\pc)} = \pjz{x}{\lbl}@t  \\
	\tup{m, a} \archStep{}{} \tup{m',a'}
	}
	{
	\tup{m, a} \contractTrans{ \pcObs{\lbl}@t  }{} \tup{ m', a'}
	}
\end{mathpar}
\begin{mathpar}[\small]
	\inferrule[(Load)]
	{
	\select{p}{a(\pc)} = \pload{x}{e}@t \\
	\tup{m, a} \archStep{}{} \tup{m',a'}
	}
	{
	\tup{m, a} \contractTrans{ \loadObs{n}@t  }{} \tup{m', a'}
	}
\end{mathpar}
where $n = \exprEval{e}{a}$ is the result of expression $e$ given register assignment $a$. $\CtSeqInterf{\cdot}$ exposes the control flow ($\pcObs{n}$, $\callObs{f}$, and $\retObs{n}$) and memory addresses ($\loadObs{n}$ and $\storeObs{n}$) in a sequential execution model. Note, that the values of loads and stores are not leaked.

\subsection{\name{} Semantics}
\label{sec:semantics}

In this section, we formalize a contract-informed semantics for the \name{} methodology (denoted as $\ProposedMuarchSem{\cdot}$).

As the first step, we define auxiliary contract traces to enable our contract-informed hardware semantics:

\begin{definition}[Crypto control flow trace $\mathcal{C}$]\label{def:cflookup}
For a given program $p$, initial architectural state $\sigma_0$ and contract $\CustomInterf{\cdot}{\beta}{\alpha}$, $\ControlAux{\beta}{\alpha}(p, \sigma_0)$ is a subtrace of contract trace $\CustomInterf{p}{\beta}{\alpha}(\sigma_0) = [\tau_1@ t_1 \concat \dots \concat \tau_n@ t_n]$, consisting of all crypto control flow observations:
\begin{center}
    $\ControlAux{\beta}{\alpha}(p, \sigma_0) = [\tau_i @ \cryptotag | 1 \leq i \leq n, \tau_i \in \cfObs]$
\end{center}
\end{definition}
We can define the contract memory trace $\MemAux{\beta}{\alpha}(p, \sigma_0)$ in the same way, where it consists of only memory observations. 
Since we target constant-time cryptographic programs, the $\ControlAux{\beta}{\alpha}(p, \sigma_0)$ trace is independent from $\sigma_0$, and we use $\ControlAux{\beta}{\alpha}(p)$ for brevity.
Note, that  $\ControlAux{\beta}{\alpha}(p)(i)$ refers to the $i^{th}$ observation of the crypto control flow trace of contract $\CustomInterf{\cdot}{\beta}{\alpha}$.

\textbf{Hardware configuration}. Hardware configuration $\omega$ in \name{} consists of (1) the architectural state $\sigma$ with the memory $m$ and register assignment $a$, (2) a global counter $\cMuarch$ that counts the number of contract-level control flow observations, (3) reorder buffer $\buf$ with size $\wMuarch$, and (4) the microarchitectural context $\mu$. Microarchitectural context is the part of the microarchitecture that the attacker can observe or influence.
We use an abstract model for caches, pipeline scheduler, and branch predictor, similar to~\cite{guarnieri2021hardware}, and also add the trace cache (specific to the \name{} semantics, representing the \BTU{}). Table~\ref{table:uarch-signatures} shows the interface of each component. 
\begin{itemize}[leftmargin=0.5cm]
    \item \textit{Cache}: The $\CacheAccess$ function results in a $\CacheHit$ or $\CacheMiss$ based on a given cache state $\CacheState$ and memory address $\lbl$, and the $\CacheUpdate$ function generates a new cache state based on the input cache state and memory address;
    \item \textit{Scheduler}: The $\SchedNext$ function determines the next processor step ($\FetchDir$, $\ExeDir$, or $\CommitDir$) given the scheduler state $\SchedState$, and the $\SchedUpdate$ function updates the scheduler's state based on the reorder buffer state;
    \item \textit{Branch Predictor}: The $\BpPredict$ function predicts the outcome based on a given predictor state $\BpState$ and input branch $\lbl$. The $\BpUpdate$ function updates the predictor's state based on a branch and its resolved outcome.
    \item \textit{Trace Cache}: The $\BtAccess$ function results in a $\BtHit$ or $\BtMiss$ based on a given trace cache state $\BtState$ and branch PC $\lbl$. 
    The $\BtUpdate$ function updates the trace cache state if needed (e.g., fetching traces for a given branch PC $\lbl$ that misses in the trace cache).
\end{itemize}
Moreover, a \textit{reorder buffer} records the state of in-flight instructions. Expressions in a reorder buffer are initially unresolved, and they can transform to a resolved state after execution. 
A data-independent projection of a reorder buffer is shown as
$\BufProject{\buf}$ where resolved expressions are replaced with $\resolved$ and unresolved expressions with $\unresolved$. In addition to \cite{guarnieri2021hardware}, we define an examine function that outputs $\resolved$ for a given $\BufProject{\buf}$ if all expressions are resolved:

$\mathsf{examine}(\BufProject{\buf}) = 
   \begin{cases}
       \resolved       & \text{if all expressions in }  \BufProject{\buf} \text{ are } \resolved\\
    \unresolved  & \text{otherwise}
   \end{cases}$

\name{} semantics uses a binary relation ($\stepMapN{\nameAbbr{}}$) that maps hardware configurations to their successors:
\begin{mathpar}[\small]
\inferrule[(Step-\name{})]
{
    d = \SchedNext(\SchedState)\\
	\SchedState' = \SchedUpdate(\SchedState,\buf')\\
	\tup{m,a, \cMuarch, \buf,  \CacheState, \BpState, \BtState } ~\stepArrow{d}~ \tup{m',a', \cMuarch', \buf', \CacheState', \BpState', \BtState'  }
}
{
	\tup{m,a, \cMuarch, \buf,  \CacheState, \BpState, \BtState, \SchedState } ~\stepMap{\nameAbbr{}}~ \\\tup{m',a', \cMuarch', \buf',  \CacheState', \BpState', \BtState', \SchedState' }
}
\end{mathpar}

Given the current hardware state $\omega = \tup{m,a,\cMuarch, \buf,  \CacheState, \BpState, \BtState, \SchedState}$, the rule \textsc{Step-\name{}} finds the next directive $d$ via the $\SchedNext(sc)$ and takes an appropriate step (formalized through the fetch, execution, and commit rules) that produces the new state $\omega' =\tup{m',a', \cMuarch',$ $\buf',  \CacheState', \BtState', \SchedState'}$.

Most transition rules of \name{} are standard and the same as the baseline in~\cite{guarnieri2021hardware} and not presented here for brevity.
The difference between the \name{} rules compared to the baseline occurs in the fetch stage when a branch hits the cache. While the baseline semantics use the \textsc{Fetch-Branch-Hit} rule to determine the fetch direction via prediction, \name{} replaces this with a new set of rules that determines the next PCs based on the crypto tags and the $\CtSeqInterf{\cdot}$ contract.
For tagged branches, \name{} uses the contract traces to determine the fetch direction.
The first rule handles the case that the branch traces miss in the
\textit{Trace Cache} (our additions are highlighted):
\begin{mathpar}[\small]
    \inferrule[(Fetch-Hit-TaggedBranch-Trace-Miss)]
    {
    a' = \apply{\buf}{a}\\
    i = a'(\pc)\\ 
    p(i) = \pjz{x}{\lbl}@\cryptotag \mid \pcall{f}@\cryptotag  \mid \pret{}@\cryptotag   \\
    |\buf| < \wMuarch \\
    \CacheAccess(\CacheState, i) = \CacheHit\\
    \CacheUpdate(\CacheState, i) = \CacheState{}'\\
    \highlightBox{\BtAccess(\BtState, i) = \BtMiss}\\
    \highlightBox{\BtUpdate(\BtState, i) = \BtState'}
    }
    {
        \tup{m,a,\cMuarch, \buf,  \CacheState, \BpState, \BtState} 
        ~\stepArrow{fetch}~
        \tup{m,a,\cMuarch,\buf,\CacheState',\BpState,\BtState'}	
    }
\end{mathpar}
In this rule, the \textit{Trace Cache} is updated to bring the missed traces to hit later\footnote{In all rules, $\apply{\buf}{a}$ obtains the new register assignment $a'$ after applying the changes of resolved instructions in $\buf$~\cite{guarnieri2021hardware}.}.
The second rule handles the case that branch traces hit in the \textit{Trace Cache}:
 \begin{mathpar}[\small]
    \inferrule[(Fetch-Hit-TaggedBranch-Trace-Hit)]
    {
    a' = \apply{\buf}{a}\\
    i = a'(\pc)\\ 
    p(i) = \pjz{x}{\lbl}@\cryptotag  \mid \pcall{f}@\cryptotag  \mid \pret{}@\cryptotag   \\
    |\buf| < \wMuarch \\
    \CacheAccess(\CacheState, i) = \CacheHit\\
    \CacheUpdate(\CacheState, i) = \CacheState{}'\\
    \highlightBox{\BtAccess(\BtState, i) = \BtHit}\\
    \highlightBox{\BtUpdate(\BtState, i) = \BtState'}\\
    \highlightBox{\lbl' = \ControlAux{\interfStyle{ct}}{\interfStyle{seq}}(p)(\cMuarch)}
    }
    {
        \tup{m,a,\cMuarch,\buf,  \CacheState, \BpState, \BtState}
         ~\stepArrow{fetch}~\\
        \tup{m,a,\cMuarch+1,\buf \concat \passign{\pc}{\lbl'},\CacheState',\BpState,\BtState'}	
    }
    \end{mathpar}
In this rule, the next PC is determined through contract-level observations.

\begin{table}[t]
  \centering
  \caption{Signatures of \name{} microarchitecture.}
  \resizebox{\linewidth}{!}{%
  \begin{tabular}{l l l}
  \toprule
  \multicolumn{1}{c}{\textbf{Component}} & \multicolumn{1}{c}{\textbf{States}} & \multicolumn{1}{c}{\textbf{Functions}} \\ 
  \midrule
  
   \multirow{2}{*}{{\emph{Cache}}} & \multirow{2}{*}{{$\CacheStates$}} & $\CacheAccess: \CacheStates \times \Val \to \{\CacheHit,\CacheMiss\}$ \\
   & &  $\CacheUpdate: \CacheStates  \times \Val \to \CacheStates$\\
   
   \hline
   
   \multirow{2}{*}{{\emph{Scheduler}}} & \multirow{2}{*}{{$\SchedStates$}} & $\SchedNext: \SchedStates \to \{\FetchDir, \ExeDir, \CommitDir\}$ \\
   & &  $\SchedUpdate: \SchedStates  \times \ReorderBuffers \to \SchedStates$\\

   \hline 

   \multirow{2}{*}{{\emph{Branch Predictor}}} & \multirow{2}{*}{{$\BpStates$}} & $\BpPredict: \BpStates \times \Val \to \Val$ \\
   & &  $\BpUpdate: \BpStates \times \Val \times \Val \to \BpStates$\\
   
   \hline
   
   \multirow{2}{*}{{\emph{Trace Cache}}} & \multirow{2}{*}{{$\BtStates$}} & $\BtAccess: \BtStates \times \Val \to \{\BtHit, \BtMiss\}$\\
   & &  $\BtUpdate: \BtStates \times \Val \to \BtStates$\\
  \bottomrule
  \end{tabular}
  }
  \label{table:uarch-signatures}
\end{table}

While prior rules handle branches that are tagged by \name{} (i.e., crypto branches), the remaining branches (i.e., non-crypto branches) need to be handled differently. Non-crypto branches use the branch predictor in our design, however, \name{} performs integrity checks to avoid speculative fetch redirection if the predicted target is tagged (i.e., the target is within the crypto code). So, \name{} stalls fetch until the crypto target resolves. The next two rules handle the \name{} integrity checks:
\begin{mathpar}[\small]
    \inferrule[(Fetch-Hit-UntaggedBranch-TaggedTarget-UR)]
    {
    a' = \apply{\buf}{a}\\
    i = a'(\pc)\\ 
    p(i) = \pjz{x}{\lbl}@\varepsilon \mid \pcall{f}@\varepsilon \mid \pret{}@\varepsilon  \\
    |\buf| < \wMuarch \\
    \CacheAccess(\CacheState, i) = \CacheHit\\
    \CacheUpdate(\CacheState, i) = \CacheState{}'\\
    \lbl'=\BpPredict(\BpState, i)\\
    \highlightBox{p(\lbl')=j@\cryptotag}\\
    \highlightBox{\mathsf{examine}(\BufProject{\buf})=\unresolved}
    }
    {
        \tup{m,a,\cMuarch, \buf,  \CacheState, \BpState, \BtState} 
        ~\stepArrow{fetch}~
        \tup{m,a,\cMuarch,\buf,  \CacheState', \BpState, \BtState}	
    }
\end{mathpar}
Once the reorder buffer is resolved, we determine the next PC to fetch based on the specific branch type we are handling. Here, we only show a selected rule for conditional branches. The rules for calls and returns also use resolved information to find the next, sequential direction.
\begin{mathpar}[\small]
    \inferrule[(Fetch-Hit-UntaggedBranch-TaggedTarget-R-Beqz)]
    {
    a' = \apply{\buf}{a}\\
    i = a'(\pc)\\ 
    p(i) =\pjz{x}{\lbl}@\varepsilon \\
    |\buf| < \wMuarch \\
    \CacheAccess(\CacheState, i) = \CacheHit\\
    \CacheUpdate(\CacheState, i) = \CacheState{}'\\
    \highlightBox{\mathsf{examine}(\BufProject{\buf})=\resolved}\\
    \lbl'=\BpPredict(\BpState, i)\\
    \highlightBox{p(\lbl')=j@\cryptotag}\\
    \highlightBox{\lbl'' = {\begin{cases} i + 1 & \text{if $a'(x) \neq 0$} \\
                        \lbl & \text{if $a'(x) = 0$}
     \end{cases}
     }}
    }
    {
        \tup{m,a,\cMuarch,\buf,  \CacheState, \BpState, \BtState}
         ~\stepArrow{fetch}~
        \tup{m,a,\cMuarch,\buf \concat \passign{\pc}{\lbl''},  \CacheState', \BpState, \BtState}		
    }
\end{mathpar}

Finally, we allow normal fetch and prediction for untagged branches if the predicted target is untagged as well (see Figure~\ref{fig:cassandra-scenarios} for all control flow transitions of a \name{}-enabled processor):
\begin{mathpar}[\small]
    \inferrule[(Fetch-Hit-UntaggedBranch-UntaggedTarget)]
    {
    a' = \apply{\buf}{a}\\
    i = a'(\pc)\\ 
    p(i) = \pjz{x}{\lbl}@\varepsilon \mid \pcall{f}@\varepsilon \mid \pret{}@\varepsilon  \\
    |\buf| < \wMuarch \\
    \CacheAccess(\CacheState, i) = \CacheHit\\
    \CacheUpdate(\CacheState, i) = \CacheState{}'\\
    \lbl'=\BpPredict(\BpState, i)\\
    \highlightBox{p(\lbl')=j@\varepsilon}
    }
    {
        \tup{m,a,\cMuarch, \buf,  \CacheState, \BpState, \BtState} 
        ~\stepArrow{fetch}~
        \tup{m,a,\cMuarch,\buf \concat \passign{\pc}{\lbl'},  \CacheState',\BpState, \BtState'}	
    }
\end{mathpar}

Since \name{} exploits contract-level observations of the $\CtSeqInterf{\cdot}$, it is guaranteed that no branch mispredictions happen and there is no need to recover the $\cMuarch$ state for squashing branches. In addition, we assume that data flow speculation is disabled and they cannot cause squashes as well (our experiments and prior work~\cite{mosier2024serberus} show that disabling or naively addressing data flow speculation for cryptographic programs incurs negligible overheads).

\subsection{Definitions and Theorems}
\label{sec:theorems}

\textbf{Adversary model}. We define the adversary as a projection function $\mathcal{A}$ that specifies observations from a microarchitectural context. 
For a given hardware semantics $\muarchSem{\cdot}$ and program $p$, hardware run $\omega_0 \hwTrans{} \omega_1 \hwTrans{} \dots \hwTrans{} \omega_n$ produces the hardware observations:
    $\muarchSem{p}(\sigma_0) = [\mathcal{A}(\omega_0) \mathcal{A}(\omega_1) \dots \mathcal{A}(\omega_n)]$.
\begin{definition}[$\omega \approx \omega'$]\label{def:hw-indistinguishability}
	Two hardware configurations $\omega = \tup{m,a, \cMuarch,$ $\buf, \CacheState,\BtState, \SchedState}$ and $\omega' = \tup{m',a',\cMuarch,\buf', \CacheState',\BtState', \SchedState'}$ are \textit{indistinguishable}, iff $\mathcal{A}(\omega) = \mathcal{A}(\omega')$.
\end{definition}
We consider an adversary that observes the entire microarchitectural context, including the reorder buffer, the cache (which only contains the addresses, not the values), the trace cache, and the branch predictor.

To express security guarantees of hardware semantics $\muarchSem{\cdot}$ against a contract $\interfSem{\cdot}$, we use Definition~\ref{def:hni} from~\cite{guarnieri2021hardware}.

\begin{definition}[$\hsni{\interfSem{\cdot}}{\muarchSem{\cdot}}$]\label{def:hni}
A hardware semantics $\muarchSem{\cdot}$ satisfies a contract  $\interfSem{\cdot}$ if for an arbitrary program $p$ and arbitrary initial architectural states $\sigma,\sigma'$:
\begin{center}
    $\interfSem{\Prg}(\sigma) = \interfSem{\Prg}(\sigma') \Rightarrow \muarchSem{\Prg}(\sigma)= \muarchSem{\Prg}(\sigma')$.
\end{center}
\end{definition}
Note, that we require the initial microarchitectural components to be the same for this definition.

\begin{restatable}{thm}{allCtSpec}
    \label{theorem:hni:all}
    For all cryptographic programs that branches are tagged: 
    $\hsni{\CtSeqInterf{\cdot}}{\ProposedMuarchSem{\cdot}}$.
\end{restatable}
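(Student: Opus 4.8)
The plan is to unfold Definition~\ref{def:hni} and prove the statement by a lockstep simulation between the two hardware runs. Fix a program $p$ and initial architectural states $\sigma,\sigma'$ with $\CtSeqInterf{p}(\sigma) = \CtSeqInterf{p}(\sigma')$; the goal is $\ProposedMuarchSem{p}(\sigma) = \ProposedMuarchSem{p}(\sigma')$. The central idea is that $\ProposedMuarchSem{\cdot}$ never misspeculates: crypto branches fetch their successor directly from the contract subtrace $\ControlAux{\interfStyle{ct}}{\interfStyle{seq}}(p)$, while non-crypto branches stall until the reorder buffer resolves. Hence both runs are forced to follow the same sequential control flow dictated by the (equal) contract traces, and the adversary's view is a purely scheduled function of that flow.

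Concretely, I would define an invariant $\mathcal{J}(\omega,\omega')$ on pairs of configurations $\omega = \tup{m,a,\cMuarch,\buf,\CacheState,\BtState,\SchedState}$ and $\omega'=\tup{m',a',\cMuarch',\buf',\CacheState',\BtState',\SchedState'}$ asserting: (i) $\omega \approx \omega'$, i.e. $\mathscr{A}(\omega)=\mathscr{A}(\omega')$; (ii) the data-independent structural state agrees, $\SchedState=\SchedState'$, $\CacheState=\CacheState'$, $\BtState=\BtState'$, $\cMuarch=\cMuarch'$, and $\BufProject{\buf}=\BufProject{\buf'}$; and (iii) the two runs have consumed identical prefixes of their contract traces, which by hypothesis are equal. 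The base case holds because the note following Definition~\ref{def:hni} fixes the initial microarchitectural components to be identical and $\cMuarch=0$ in both runs.

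For the inductive step I would case-split on the directive $d=\SchedNext(\SchedState)$, which coincides in both runs since $\SchedState=\SchedState'$. Execute and commit steps, together with fetches of non-branch instructions, are inherited verbatim from~\cite{guarnieri2021hardware}: they preserve $\BufProject{\cdot}$ because the projection abstracts away operand values, so whether an entry is $\resolved$ or $\unresolved$ evolves identically, and the accompanying $\CacheUpdate$ consumes only memory addresses, which agree by the $\memObs$ part of the contract equality. The interesting cases are the \name{} fetch rules. On a tagged branch with a trace-cache hit (\textsc{Fetch-Branch-Hit-Tagged-Trace-Hit}) the successor is $\lbl' = \ControlAux{\interfStyle{ct}}{\interfStyle{seq}}(p)(\cMuarch)$; since $\cMuarch=\cMuarch'$ and this subtrace is a deterministic function of the equal contract traces, both runs append the same $\passign{\pc}{\lbl'}$ and bump $\cMuarch$ identically, while $\BtUpdate$ and $\CacheUpdate$ read only the branch PC $i$ and stay in lockstep; the miss case (\textsc{Fetch-Branch-Hit-Tagged-Trace-Miss}) merely updates $\BtState$ via $i$ and emits no control-flow observation, so it is trivially synchronized. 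For an untagged branch, \textsc{Fetch-Branch-Hit-Untagged-Unresolved} stalls exactly while $\mathsf{examine}(\BufProject{\buf})=\unresolved$; since $\BufProject{\buf}=\BufProject{\buf'}$ the runs stall and resume together, and the resolved direction (e.g. the $a'(x)$ test in \textsc{Fetch-Branch-Hit-Untagged-Resolved-Beqz}) produces a $\cfObs$ observation that is recorded in the \emph{full} contract trace, so the equality $\CtSeqInterf{p}(\sigma)=\CtSeqInterf{p}(\sigma')$ forces the same $\lbl'$. In every case $\mathcal{J}$ is re-established and the newly emitted observation $\mathscr{A}(\cdot)$ agrees, so by induction on run length the two observation sequences coincide.

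The main obstacle I anticipate is establishing clause (ii) rigorously, in particular that $\BufProject{\buf}$ and $\SchedState$ stay equal across execute steps, i.e. that whether a given reorder-buffer entry becomes $\resolved$ never depends on a secret operand value but only on which operands have already been computed. This is precisely where constant-timeness enters: the only secret-dependent quantities that could steer the schedule are branch conditions and memory addresses, and both are pinned down by the contract observations $\cfObs$ and $\memObs$. I would discharge this by reusing the corresponding projection-preservation lemma of~\cite{guarnieri2021hardware} for the inherited rules and then checking the three new \name{} fetch rules separately, noting that each reads $a'=\apply{\buf}{a}$ only through the contract-exposed PC and never branches on unexposed secret data.
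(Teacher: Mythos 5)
Your proposal is correct and follows essentially the same route as the paper's proof: a lockstep induction over hardware steps showing that indistinguishability is preserved because tagged branches fetch from the (equal) contract subtrace $\ControlAux{\interfStyle{ct}}{\interfStyle{seq}}(p)$ at the shared counter $\cMuarch$ and untagged branches stall until resolution, so both runs evolve identically. Your version is in fact more explicit than the paper's --- you spell out the strengthened invariant (structural equality of $\SchedState$, $\CacheState$, $\BtState$, $\cMuarch$, and $\BufProject{\buf}$) and the projection-preservation obligation for execute steps that the paper's inductive step leaves implicit --- but this is a refinement of the same argument, not a different one.
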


\begin{proof}
Let $p$ be an arbitrary crypto program. Moreover, let $\sigma_0 = \tup{m,a}$ and $\sigma_0' = \tup{m',a'}$ be two arbitrary initial architectural states. Two possible cases are:
\begin{enumerate}[leftmargin=0.5cm]
    \item $\CtSeqInterf{p}(\sigma_0) \neq \CtSeqInterf{p}(\sigma_0')$: which trivially holds $\interfSem{\Prg}(\sigma_0) = \interfSem{\Prg}(\sigma_0')$ $\Rightarrow \muarchSem{\Prg}(\sigma_0)= \muarchSem{\Prg}(\sigma_0')$.
    \item $\CtSeqInterf{p}(\sigma_0) = \CtSeqInterf{p}(\sigma_0')$: By unrolling $\CtSeqInterf{p}(\sigma)$, two contract runs are obtained that agree on all observations ($\forall 0 \leq i \leq n: \tau_i = \tau_i'$):
\end{enumerate}
\begin{center}
    $\crun := \sigma_0 \contractTrans{\tau_1} \sigma_1 \contractTrans{\tau_2} \dots \contractTrans{\tau_n} \sigma_n$  \hspace{0.3cm}$\crunp := \sigma_0' \contractTrans{\tau_1'} \sigma_1' \contractTrans{\tau_2'} \dots \contractTrans{\tau_n'} \sigma_n'$
\end{center}
and produced hardware runs of $\ProposedMuarchSem{p}(\sigma_0)$ and $\ProposedMuarchSem{p}(\sigma_0')$:
\begin{center}
    $\hrun := \omega_0 \stepMap{\nameAbbr{}} \omega_1 \stepMap{\nameAbbr{}} \dots \stepMap{\nameAbbr{}} \omega_m$\\
     $\hrunp := \omega_0' \stepMap{\nameAbbr{}} \omega_1' \stepMap{\nameAbbr{}} \dots \stepMap{\nameAbbr{}} \omega_m'$
\end{center}
where $\hrun(i) = \omega_i$ and $\crun(i) = \sigma_i$. We prove by induction that $\ProposedMuarchSem{\Prg}(\sigma_0)= \ProposedMuarchSem{\Prg}(\sigma_0')$, i.e., $\forall 0 \leq i \leq m: \hrun(i) \approx \hrunp(i)$.

(\textit{Induction basis}): the initial hardware configurations $\hrun(0)$ and $\hrunp(0)$ are indistinguishable by definition as they agree on their microarchitectural components.

(\textit{Inductive step}): assume that after $i$ steps in $\ProposedMuarchSem{\Prg}$: $\hrun(i) \approx \hrunp(i)$. 
Since in our hardware semantics observations are informed by the contract $\CtSeqInterf{\cdot}$,
the corresponding contract runs of $\crun$ and $\crunp$ take the same steps $k$. In other words, the corresponding contract state of $\hrun(i)$ is $\crun(k)$, and the corresponding contract state of $\hrunp(i)$ is $\crunp(k)$. 
Based on our assumptions, (a) $\hrun(i)$ and $\hrunp(i)$ agree on all microarchitectural components and (b) the next steps of $\ProposedMuarchSem{\Prg}$ to obtain $\hrun(i+1)$ and $\hrunp(i+1)$ are determined by the $\crun(k+1)$ and $\crunp(k+1)$ observations, which are the same by assumption. Hence, based on (a) and (b): $\hrun(i+1) \approx \hrunp(i+1)$.
\end{proof}

Note, that Theorem~\ref{theorem:hni:all} only holds for crypto programs. Processors would have to deploy another defense for non-crypto branches to guarantee software isolation as well (e.g., Guarnieri et al.~\cite{guarnieri2021hardware} prove that STT~\cite{yu2019speculative} and NDA~\cite{weisse2019nda} satisfy the $\ArchSeqInterf{\cdot}$ contract which is sufficient for software isolation).

\bibliographystyle{ACM-Reference-Format}
\bibliography{refs}

\end{document}